\documentclass[11pt,a4paper]{article}

\usepackage[utf8]{inputenc}
\usepackage[T1]{fontenc}

\usepackage{amsmath,amssymb,amsthm}
\newtheorem{theorem}{Theorem}[section]

\usepackage{mathtools}

\usepackage[hidelinks]{hyperref}
\usepackage{xurl}

\usepackage[margin=2.5cm]{geometry}
\usepackage{setspace}
\usepackage{titling}
\usepackage{authblk}
\usepackage{graphicx}
\usepackage{booktabs}
\usepackage{array}
\usepackage{tabularx}
\usepackage{caption}
\usepackage{algorithm}
\usepackage{algpseudocode}
\usepackage{listings}

\usepackage[numbers,sort&compress]{natbib}

\usepackage{xcolor}
\usepackage{tikz}
\usepackage{seqsplit}
\usepackage{enumitem}
\setlist{itemsep=2pt, topsep=4pt}

\providecommand{\keywords}[1]{\vspace{0.5em}\noindent\textit{Keywords---} #1}

\title{snaproot: Decentralized File Integrity Verification \\
       Using Blockchain-Anchored Cryptographic Hashing}

\author[]{Arslan Br\"omme\thanks{Dipl.-Inform., B.Sc., CISSP, CISA, CISM, CAISE.
                            \texttt{arslanb@chain-horizon.com}}
{ }and Tarkan Yavas\thanks{M.Eng., CISM, ISO27001 Lead Auditor, CAISE.
                              \texttt{tarkan.yavas@chain-horizon.com}}}
\affil[]{Chain Horizon GmbH}

\date{Version 0.9.9.3 (Working Draft)}

\begin{document}
\maketitle

\begin{center}
\fbox{\parbox{0.92\linewidth}{\centering\small\itshape
Preprint / working paper. This version is a work in progress and may be
updated. Comments are welcome.}}
\end{center}
\vspace{0.5em}

\begin{abstract}
The rapid growth of digital content has made reliable integrity verification
increasingly important. Existing solutions rely either on centralized
authorities, which introduce trust dependencies and single points of failure,
or on decentralized storage systems that incur prohibitive resource overhead.
In this paper, we present \emph{snaproot}, a lightweight system that
implements the hash-anchoring paradigm of Haber and Stornetta on the Solana
blockchain to provide efficient, decentralized file integrity verification.
snaproot generates a SHA-256 hash of a file and stores it immutably on-chain
as a permanent reference record. Verification is performed by recomputing
the hash and comparing it to the stored value, yielding a deterministic
binary outcome. Beyond the basic anchoring mechanism, we describe a
four-tier trust architecture comprising three realized tiers --- user
sovereignty through locally generated keys, within-chain immutability
through consensus, and cross-fork verifiability --- and one prospective
tier consisting of a planned cross-chain migration mechanism for
long-term persistence beyond the lifetime of any single blockchain. We present a formal threat
model, a security analysis grounded in the second-preimage resistance of
SHA-256, and an empirical evaluation on Solana Devnet across file sizes
from 1\,KB to 500\,MB, with sub-five-second confirmation latency on Devnet that
supports capture-time anchoring of photographs as a practical
workflow for typical network conditions. The system is deployed on Solana Mainnet;
a formal characterization under Mainnet load is left to future work.
A central conceptual contribution of this work is the explicit
structural separation between \emph{existence proof} --- the
chain-resident, key-independent claim that a file existed at a given
time --- and \emph{authorship proof} --- the key-dependent binding
between a record and a specific wallet identity. This separation,
which is absent or implicit in most prior hash-anchoring designs,
allows existence guarantees to survive key loss, organisational
change, and provider disappearance, while preserving the ability to
make stronger authorship claims where keys are retained. We position
snaproot against established hash-anchoring systems (OpenTimestamps,
OriginStamp, Chainpoint) and discuss limitations with respect to
pre-registration manipulation and AI-generated content.
\end{abstract}

\keywords{blockchain, file integrity, cryptographic hashing, decentralized
verification, Solana, digital trust, capture-time anchoring}

\section{Introduction}
\label{sec:introduction}

The integrity of digital files is fundamental to a wide range of
applications, including legal documentation, digital forensics, medical
recordkeeping, and content authentication. Unlike physical documents,
digital files can be silently modified without leaving any visible trace,
making post-hoc verification difficult~\cite{menezes1996handbook}. This
challenge has grown significantly more complex with the proliferation of
generative artificial intelligence, which enables the creation of highly
realistic synthetic content that is often indistinguishable from authentic
material by visual inspection alone~\cite{rossler2019faceforensics}.

Traditional approaches to data integrity rely on centralized mechanisms
such as digital signatures issued by certificate authorities, proprietary
timestamping services, or notarial registries. While these approaches can
function adequately in controlled environments, they share a structural
limitation that no implementation quality can remove: the integrity claim
ultimately depends on the continued correct operation of a single, trusted
authority. The 2011 compromise of the DigiNotar certificate authority
illustrates this limitation concretely: a single intrusion enabled the
issuance of hundreds of fraudulent certificates, and users had no
independent mechanism to verify whether a given certificate had been
manipulated or legitimately issued~\cite{prins2011diginotar}. Centralized
authorities can be coerced, compromised, or simply disappear, taking the
integrity claims they support with them.

Blockchain technology offers a structurally different alternative. A
distributed, publicly auditable ledger removes the dependency on any
single party for record persistence. Once data is written and confirmed,
it cannot be altered without network-wide consensus, making blockchains
an attractive substrate for long-term integrity anchoring. However,
directly storing large files on-chain is prohibitively expensive and
technically impractical. A lightweight alternative is to store only the
cryptographic hash of a file, which serves as a compact and verifiable
fingerprint~\cite{nakamoto2008bitcoin}. The hash-anchoring paradigm,
originally proposed by Haber and Stornetta~\cite{haber1991timestamp},
operationalizes this principle and has been instantiated in systems such
as OpenTimestamps~\cite{todd2016ots}, OriginStamp~\cite{gipp2015originstamp},
and Chainpoint~\cite{vaughan2016chainpoint}.

In this paper, we present \emph{snaproot}, a system that combines SHA-256
hashing with on-chain hash anchoring on the Solana blockchain. snaproot is
a concrete realization of the hash-anchoring paradigm on a high-throughput
blockchain substrate, with confirmation latencies in the single-digit
seconds range rather than the minutes or hours required by Bitcoin-based
predecessors. This latency profile is a qualitative enabler for use cases that are impractical on slower
substrates under typical conditions, most notably
\emph{capture-time anchoring}, in which a photograph is hashed
and registered on-chain within the user's natural shutter-to-preview
window. The system is deployed on Solana Mainnet and is in productive use.

\subsection{The Trust Architecture}

Rather than positioning snaproot primarily as a performance improvement
over existing hash-anchoring systems, this paper develops a more
fundamental claim: snaproot realizes a layered trust architecture in
which integrity claims depend on no single trusted party and remain
verifiable across organizational, technological, and generational
boundaries. We describe this architecture in four tiers:

\begin{itemize}
\item \textbf{User sovereignty.} Each snaproot installation generates a
Solana keypair locally. The private key never leaves the user's device.
Records are signed under the user's own wallet, and verification requires
only the original file, a transaction identifier, and access to any
Solana RPC endpoint --- no interaction with snaproot infrastructure is
needed. The user is structurally independent of the provider; if snaproot
as an organization were to cease operation, existing records would remain
fully verifiable.
\item \textbf{Within-chain integrity.} Once confirmed on Solana, records
are protected by consensus and the second-preimage resistance of SHA-256.
A modification of a registered file is detectable with deterministic
certainty, and reversing a confirmed record would require breaking
consensus on a supermajority of staked validators.
\item \textbf{Cross-fork verifiability.} Past records survive
protocol-level disruptions of the substrate. The Solana network has
undergone protocol changes and recovery events; however, these affect the
liveness of the network for new anchorings, not the safety of records
already on archived chain history. As long as at least one archive of the
relevant chain segment remains accessible, the existence of an anchored
record can be verified, and any tampering with chain history becomes
detectable through comparison of independently maintained archives.
\item \textbf{Cross-chain trust migration.} snaproot is designed with
the prospect of substrate independence in mind. In the planned
mechanism, the hashes recorded on Solana could be aggregated by the
user and re-anchored on a different blockchain (for example, Ethereum)
through a user-initiated, wallet-bound migration. This mechanism is
described as an outlook rather than an active development item; its
purpose, if realized, would be to ensure that the choice of Solana
remains operational rather than architectural, so that the system can
outlive the lifetime of any single blockchain.
\end{itemize}

An important consequence of this architecture is the structural separation
of two properties that are conflated in many alternative designs. The
\emph{existence proof} for a registered file --- the cryptographic claim
that the file existed in its current form at the time of registration ---
is independent of any specific key custody. It can be verified by anyone
with access to the file and a working RPC endpoint, indefinitely, and
remains intact even if the registrant's private key is later lost. The
\emph{authorship proof} --- the cryptographic binding between the record
and a specific wallet identity --- is key-dependent and is forfeited if
the private key is lost. This separation allows snaproot to support both
identified and pseudonymous integrity claims, and to provide existence
guarantees that survive scenarios in which authorship claims would not.

\subsection{Applications}

The trust architecture is general and supports a range of applications.
We discuss two illustrative domains in this paper.

\emph{Media authenticity against AI-generated content.} The deterministic
nature of hash comparison provides a counterpoint to the inherently
probabilistic approaches based on AI-based detection. Where a detector
classifies an image as authentic or manipulated with some confidence
that degrades against unfamiliar generators~\cite{rossler2019faceforensics,
li2018ictuoculi}, snaproot answers a different and stricter question:
whether a specific file has changed since its on-chain registration. When
combined with capture-time anchoring --- registering the photograph at
the moment of capture, before any modification opportunity --- this
shifts the trust boundary from "the asserted creator" to "the capture
pipeline of a specific app and device", a substantially smaller and more
auditable surface.

\emph{Tamper-evident records for compliance.} Audit logs, regulatory
records, and other documentation subject to integrity requirements
typically rely on organizationally controlled storage or signed archives.
The integrity claim then depends on the same organization that may also
benefit from manipulation. Anchoring such records to a decentralized
ledger removes this conflict: an auditor or regulator can verify
integrity externally, without relying on the audited organization's own
infrastructure.

\subsection{Contributions}

The main contributions of this work are:

\begin{enumerate}
\item A four-tier trust architecture for blockchain-anchored integrity
verification, of which three tiers are fully realized in the deployed
system --- user sovereignty, within-chain integrity, and cross-fork
verifiability --- and one tier (substrate-independent cross-chain
migration) is developed as a coherent prospective design rather than
an implemented mechanism. We identify and discuss the structural separation
of existence proof and authorship proof, which clarifies the security
guarantees of the system and distinguishes them from those of alternative
designs.
\item A practical Solana implementation of the architecture, deployed on
Mainnet, including capture-time anchoring of photographs as a functional
demonstration of how the substrate's latency characteristics enable
specific application classes that are structurally infeasible on
slower-confirmation substrates.
\item A formal threat model with a security reduction to the
second-preimage resistance of SHA-256, covering non-registrant
modification, replay, timestamp manipulation, and duplicate registration,
together with an empirical evaluation on Solana Devnet measuring hashing
performance, confirmation latency, cost, and sustained throughput.
\end{enumerate}

The remainder of this paper is organized as follows.
Section~\ref{sec:related} reviews related work.
Section~\ref{sec:threat} defines the threat model and problem formulation.
Section~\ref{sec:architecture} describes the system architecture and
protocols, including capture-time anchoring.
Section~\ref{sec:security} provides a security analysis along the four
tiers of the trust architecture.
Section~\ref{sec:evaluation} presents the empirical evaluation.
Section~\ref{sec:limitations} discusses limitations, the active roadmap,
and the longer-term outlook including cross-chain migration.
Section~\ref{sec:discussion} discusses the trust architecture in the
context of media authenticity and compliance applications.
Section~\ref{sec:conclusion} concludes.

\section{Related Work}
\label{sec:related}

We discuss prior work in five areas relevant to snaproot's positioning:
blockchain-based timestamping and hash anchoring, decentralized storage,
on-chain ownership records and NFTs, AI-based content authentication, and
tamper-evident logging systems. Throughout, we focus on the specific
properties that distinguish snaproot from each line of work, rather than
on a comprehensive survey.

\subsection{Blockchain-Based Timestamping and Hash Anchoring}
\label{sec:related:anchoring}

The use of a tamper-evident chain to establish temporal ordering and
provenance of digital documents was pioneered by Haber and
Stornetta~\cite{haber1991timestamp}, predating modern blockchains by
nearly two decades. Several systems instantiate this concept on public
blockchains. OpenTimestamps~\cite{todd2016ots} anchors hash values on
the Bitcoin blockchain using Merkle aggregation, allowing a single
Bitcoin transaction to attest to arbitrarily many documents. Its design
is fully permissionless and provides strong long-term guarantees by
inheriting Bitcoin's settlement assurances; however, Bitcoin's
confirmation window of approximately sixty minutes makes it unsuitable
for latency-sensitive applications. OriginStamp~\cite{gipp2015originstamp},
while presented by its authors as a decentralized timestamping service,
in practice relies on a centralized aggregation backend that batches
client hashes before anchoring their Merkle root to Bitcoin; this
operational centralization partially undermines its end-to-end trust
guarantees, as users depend on the continued operation and honesty of
the aggregation service. Chainpoint~\cite{vaughan2016chainpoint} and
related hash-aggregation services such as Factom, Woleet, and Po.et
similarly provide Merkle-tree-based anchoring on Bitcoin or Ethereum,
typically operated as managed services that batch many client hashes
into a single on-chain root.

These systems share with snaproot the basic hash-anchoring paradigm but
differ along three properties central to the present work. First,
\emph{user sovereignty}: in snaproot, each installation generates a
Solana keypair locally and the user signs each record under their own
wallet, so that the verification path does not require the operator's
infrastructure. In contrast, Merkle-aggregation services typically place
the user's hashes inside an operator-controlled aggregation step, and
verification depends on the operator's continued willingness and
ability to provide Merkle proofs. Second, \emph{latency profile}:
snaproot's sub-five-second confirmation latency on Solana (Devnet; see Section~\ref{sec:evaluation}) supports use cases that are impractical on Bitcoin-based substrates under typical conditions, most notably the capture-time anchoring of photographs discussed in
Section~\ref{sec:architecture}. Third, \emph{substrate independence}:
snaproot is designed such that hashes anchored on one chain can be
aggregated and re-anchored on another at user initiative, decoupling the
long-term verifiability of records from the lifetime of any single
blockchain. We return to this point in Sections~\ref{sec:security}
and~\ref{sec:limitations}.

The current snaproot implementation anchors one hash per transaction and
does not perform on-chain Merkle aggregation; we discuss this design
choice and its implications in Section~\ref{sec:limitations}.

Table~\ref{tab:related-work} summarises the comparison along the
dimensions most relevant to the trust architecture developed in this
paper.

\begin{table}[!htbp]
\centering
\small
\caption{Comparison of hash-anchoring systems along trust-architecture
dimensions. \checkmark\ = property holds by design;
$\circ$\ = partially or with caveats;
--\ = property absent or structurally precluded.
Provider-independence: verification requires no operator cooperation.
User-signed records: each record carries the user's own cryptographic
signature. Capture-time anchoring: confirmation latency supports
synchronous photo registration. Existence/authorship separation:
system explicitly separates the two proof types.}
\label{tab:related-work}
\begin{tabularx}{\linewidth}{X c c c c}
\toprule
\textbf{Property}
  & \textbf{OTS}
  & \textbf{OriginStamp}
  & \textbf{Chainpoint}
  & \textbf{snaproot} \\
\midrule
Provider-independent verification
  & $\circ$ & -- & $\circ$ & \checkmark \\
User-signed records
  & -- & -- & -- & \checkmark \\
Per-user authorship proof
  & -- & -- & -- & \checkmark \\
Capture-time anchoring
  & -- & -- & -- & \checkmark \\
Existence / authorship separation
  & -- & -- & -- & \checkmark \\
No operator aggregation step
  & \checkmark & -- & -- & \checkmark \\
Long-term substrate independence
  & \checkmark & $\circ$ & $\circ$ & $\circ^\dagger$ \\
Permissionless access
  & \checkmark & $\circ$ & $\circ$ & \checkmark \\
\bottomrule
\multicolumn{5}{l}{\footnotesize $^\dagger$Prospective; cross-chain
migration is a planned outlook feature, not yet implemented.}
\end{tabularx}
\end{table}

The most consequential differences are in the first four rows.
Provider-independent verification and user-signed records are
architectural properties that derive from snaproot's local-wallet
model; they are absent in Merkle-aggregation services because the
user's hash is processed inside an operator-controlled aggregation
step, and verification depends on the operator supplying the Merkle
proof. The explicit separation of existence proof and authorship proof
is a conceptual contribution of the present work and is not discussed
in the prior systems. Capture-time anchoring is a latency-driven
property that is structurally precluded on Bitcoin-based substrates.
Long-term substrate independence is the one dimension on which
OpenTimestamps holds a genuine advantage: Bitcoin's archive ecosystem
is more decentralised and longer-established than Solana's, and
snaproot's cross-chain migration path is prospective rather than
realised.

\subsection{Decentralized Storage Systems}

IPFS~\cite{benet2014ipfs} provides a content-addressed distributed
storage layer in which files are identified by their content hash (CID).
While this enables integrity checking as a side effect, the system is
designed primarily for data distribution and availability, not for
authoritative integrity verification: the same CID can be served by any
participating node, and long-term persistence depends on external pinning
services or the Filecoin incentive layer~\cite{protocollabs2017filecoin},
introducing variable costs and operational dependencies. snaproot takes
a fundamentally different approach: file content is never transmitted or
stored by the system. Only the hash is anchored on-chain, eliminating
storage overhead while retaining full integrity verification capability.
The two approaches are complementary rather than competing: a workflow
that requires both content availability and authoritative integrity
could combine IPFS-based storage with snaproot-based anchoring.

\subsection{Blockchain Ownership and NFTs}

Non-fungible tokens (NFTs) associate digital assets with on-chain
ownership records~\cite{entriken2018erc721}. However, many NFT
implementations store asset data or metadata off-chain and reference it
through a TokenURI or external storage service. Prior work on ERC-721
metadata permanence shows that a substantial share of NFT assets depend
on non-permanent, centralized, or mutable
storage~\cite{barrington2022nft}, meaning the on-chain ownership record
alone does not guarantee file integrity --- an asset can be transferred
on-chain while its underlying content is silently replaced off-chain.
snaproot differs fundamentally by anchoring the hash of the actual file
contents directly on-chain. Any subsequent modification of the file is
immediately detectable, regardless of how the file is stored or
distributed. Ownership and integrity are orthogonal properties; snaproot
addresses the latter without making claims about the former.

\subsection{AI-Based Content Authentication}

A growing body of literature applies machine learning to detect
AI-generated or manipulated content~\cite{rossler2019faceforensics,
li2018ictuoculi}. Although these approaches have achieved high accuracy
on known generative architectures, they are inherently probabilistic and
degrade as generation quality improves. R\"ossler et
al.~\cite{rossler2019faceforensics} showed, in the context of
FaceForensics++, that detectors trained on a single manipulation method
generalize substantially worse than those trained across the full set of
manipulation methods in the dataset, indicating strong sensitivity to
the training distribution. Content authenticity standards such as
C2PA~\cite{c2pa2023spec} embed signed provenance metadata at creation
time, which is complementary but requires tool-level adoption by content
producers and depends on the integrity of the signing infrastructure.

snaproot is orthogonal to both approaches: rather than classifying
content, it verifies whether a previously registered file has been
altered. The verification answer is binary and deterministic, with
soundness reducing to the second-preimage resistance of SHA-256. Used
together with point-of-capture mechanisms --- either C2PA-conformant
hardware or the capture-time anchoring path implemented in snaproot
itself --- the deterministic post-hoc verification provided by anchoring
and the provenance signal provided by capture-time mechanisms reinforce
each other.

\subsection{Tamper-Evident Logging and Certificate Transparency}

Certificate Transparency~\cite{laurie2013ct} uses append-only,
cryptographically verifiable Merkle tree logs to provide auditable
records of TLS certificates. This architecture shares snaproot's core
tamper-evident property but is domain-specific and relies on a fixed
set of designated log operators; users must trust the federated operator
set to behave honestly. snaproot generalizes the concept to arbitrary
files using a fully permissionless blockchain substrate, eliminating
the need for trusted log operators. The motivating example from
Section~\ref{sec:introduction} --- the DigiNotar compromise of
2011~\cite{prins2011diginotar} --- illustrates the failure mode that
both Certificate Transparency and snaproot are designed to make
detectable: a trusted authority that quietly behaves dishonestly. The
two systems address this concern in different domains and at different
layers, but with structurally similar primitives.

\section{Threat Model and Problem Formulation}
\label{sec:threat}

We define the adversary model, security goals, and assumptions under
which snaproot operates. The goals are structured to make explicit the
four tiers of the trust architecture introduced in
Section~\ref{sec:introduction}: user sovereignty, within-chain integrity,
cross-fork verifiability, and substrate-independent migration.

\subsection{Adversary Model}
\label{sec:threat:adversary}

We consider a probabilistic polynomial-time (PPT) adversary $\mathcal{A}$
with full read access to the public state of the Solana blockchain,
including all anchored hash records, transaction metadata, and timestamps.
The adversary may also observe the broader public environment: the
existence of files registered by other users, the wallet pubkeys of
registrants, and any metadata associated with anchored records.

The primary threat that snaproot must resist is the post-registration
substitution of a registered file by a different file that nonetheless
passes verification. Given a registered file $F$ with hash
$H = \mathrm{SHA256}(F)$, this corresponds to $\mathcal{A}$ finding
$F' \neq F$ with $\mathrm{SHA256}(F') = H$ --- that is, a second-preimage
attack against SHA-256. Specifically, $\mathcal{A}$ may attempt to:

\begin{itemize}
\item Observe any registered hash value and the public key of the
registrant, and any associated metadata.
\item Modify the contents of a registered file $F$ to produce $F'$ and
present $F'$ as passing verification against the original registration.
This is a second-preimage attack and constitutes the primary threat.
\item Forge or retroactively modify a blockchain record so as to
validate a modified file.
\item Mount a full preimage attack: given $H$, attempt to find any $F'$
with $\mathrm{SHA256}(F') = H$. This is strictly harder than the
second-preimage attack and is included for completeness.
\item Mount a collision attack: find $F$ and $F'$ with $F \neq F'$ and
$\mathrm{SHA256}(F) = \mathrm{SHA256}(F')$. This threat is primarily
relevant in the stronger sub-case of a malicious registrant who commits
to $F$ but later attempts to substitute a precomputed $F'$. It therefore
affects the non-repudiation property rather than the core integrity
guarantee.
\end{itemize}

\paragraph{Out of scope.} We assume that $\mathcal{A}$ cannot break the
computational hardness assumptions underlying SHA-256 and cannot compromise
the consensus mechanism of the Solana network. Network-level attacks
against Solana's validator set are outside the scope of application-layer
integrity verification.

\paragraph{Frontrunning as a known limitation.} A specific concern
that the current implementation does not actively defend against is
\emph{timestamp frontrunning} of registration transactions. The
primary planned mitigation is the salted anchoring construction
$H_{\mathrm{anchor}} = \mathrm{SHA256}(F \,\|\, s)$ described in
Section~\ref{sec:limitations:roadmap}; we introduce the threat here
so that the mitigation in the roadmap can be read in context. An
adversary $\mathcal{A}$ who learns of an unconfirmed registration
--- or who otherwise obtains the file before the legitimate user
has anchored it --- may submit a competing memo transaction with the
same hash under their own wallet, attempting to be confirmed first
and thereby establishing an earlier on-chain existence claim. Because
snaproot does not enforce protocol-level uniqueness on hashes
(Section~\ref{sec:architecture:onchain}), the legitimate user's own
anchor is not displaced; their existence claim remains valid from
their own anchor time onward. The damage is therefore narrower than
displacement: the adversary can claim that the file existed earlier
than it actually did under their identity, which manufactures a
competing narrative about provenance even though it does not invalidate
the legitimate registrant's record. In practice, this attack requires
$\mathcal{A}$ to know the file $F$ before its registration is
confirmed, which restricts the threat to scenarios where the file is
leaked, intercepted, or shared with a malicious party prior to
anchoring. A planned salt mechanism (Section~\ref{sec:limitations})
mitigates this by computing the anchor as
$\mathrm{SHA256}(F\,\|\,s)$ with a user-controlled salt $s$, so that
an adversary in possession of $F$ alone cannot construct a competing
anchor that matches the legitimate one. We discuss this trade-off
explicitly in Section~\ref{sec:limitations}.

\subsection{Security Goals}
\label{sec:threat:goals}

Let $F$ denote a file, $H = \mathrm{SHA256}(F)$ its hash, $W$ the
public key of the registering wallet, and $B$ the blockchain ledger.
snaproot is designed to satisfy the following properties:

\begin{description}
\item[(G1) Integrity.] If $B$ contains $H$ at time $t$, then any
$F' \neq F$ submitted for verification at time $t' > t$ produces
$H' \neq H$ (except with negligible probability), and the verification
protocol returns FALSE.

\item[(G2) Existence-proof persistence.] The claim ``a file with hash
$H$ existed at time $t$'' remains independently verifiable for any party
with access to $F$ and to any archive of $B$ that includes the record,
\emph{regardless of the continued availability of the registrant's
private key}. This property is structural: it follows from the
persistence of $B$ and from the public verifiability of the hash, and
is orthogonal to authorship.

\item[(G3) Non-repudiation (authorship).] So long as the private key
corresponding to $W$ is held by the registrant, the cryptographic
binding between the record and $W$ is irrevocable: the registrant
cannot deny having signed the registration. If the private key is
lost, the authorship claim becomes unprovable, but Integrity~(G1) and
Existence-proof persistence~(G2) remain intact.

\item[(G4) Public verifiability.] Any party with read access to $B$ can
independently perform verification, given $F$ and the transaction
identifier. No cooperation or involvement of the original registrant is
required.

\item[(G5) Provider independence.] Verification of any record requires
only $F$, the transaction identifier, and access to any functioning
Solana RPC endpoint. It does not require interaction with snaproot's
infrastructure, services, or web frontend, nor with any specific
operator. Records remain verifiable even if snaproot as an organization
ceases to operate.

\item[(G6) Cross-fork verifiability.] If $B$ undergoes a fork or
protocol-level disruption at time $t_{\text{fork}}$, records anchored
before $t_{\text{fork}}$ remain verifiable on any chain version that
shares the pre-fork history, provided at least one archive of that
history is accessible. Discrepancies between competing chain versions
after $t_{\text{fork}}$ are detectable through comparison, rather than
hidden.
\end{description}

The separation between (G2) and (G3) is central to the trust
architecture and is discussed further in Section~\ref{sec:security}.
It is what distinguishes snaproot from systems in which the loss of a
single secret invalidates both the existence claim and the authorship
claim simultaneously.

\subsection{Assumptions}
\label{sec:threat:assumptions}

The security analysis in Section~\ref{sec:security} rests on the
following assumptions:

\begin{itemize}
\item \textbf{Hash function security.} SHA-256 is second-preimage
resistant: no PPT adversary, given a fixed $F$, can find $F' \neq F$
with $\mathrm{SHA256}(F') = \mathrm{SHA256}(F)$ with non-negligible
probability. Collision resistance, a strictly stronger property, is
additionally assumed for the malicious-registrant sub-case.

\item \textbf{Solana safety.} We use the terms \emph{safety} and
\emph{liveness} in their established distributed-systems
sense~\cite{lamport1977multiprocess} --- as subsequently standardized
in the distributed-systems literature --- where safety means that no
two honest participants accept conflicting histories, so that confirmed
blocks are never reverted; liveness means that the network continues
to make progress by confirming new transactions. For
records already confirmed on $B$, we assume that the Solana consensus
mechanism maintains safety: confirmed blocks are not reverted, and the
history up to a confirmed block remains consistent across honest
validators. We distinguish this from liveness, the ability of the
network to confirm new transactions in a timely manner, which we do
not assume to hold indefinitely. Past records depend on safety; future
anchorings depend on both safety and liveness.

\item \textbf{Pre-registration integrity.} The registrant performs
registration before any adversarial modification of the file. snaproot
provides no guarantee if an already-modified file is registered.
Combining snaproot with capture-time anchoring
(Section~\ref{sec:architecture}) tightens this assumption by reducing
the window between file creation and registration to within the
substrate's confirmation latency.

\item \textbf{Archive availability for long-term verification.} For
records to remain verifiable years or decades after registration, at
least one archive of the relevant chain history must remain accessible.
We do not assume that the same archive remains accessible indefinitely,
nor that any specific operator preserves it; we only assume that some
archive of the relevant pre-fork history exists. Section~\ref{sec:limitations}
discusses the cross-chain trust migration mechanism that further weakens
this assumption.
\end{itemize}

\section{System Architecture}
\label{sec:architecture}

snaproot consists of three logical components: a Client Layer, a
Hashing Module, and a Blockchain Layer. The Client Layer is realized as
a user-facing application; the Hashing Module operates locally within
that application; and the Blockchain Layer is implemented as a Solana
program (smart contract) deployed on Solana Mainnet. We describe each
component, then specify the registration and verification protocols
and the capture-time anchoring workflow.

\subsection{Overview}
\label{sec:architecture:overview}

The data flow for the two main operations is summarized in
Figure~\ref{fig:flow}. In a registration, the user submits a file to
the Client Layer; the Hashing Module computes its SHA-256 digest; the
Client Layer constructs a Solana transaction containing this hash
together with metadata, signs it with the user's locally held key, and
submits it to the Blockchain Layer for inclusion on Solana. In a
verification, the user submits a file and the transaction identifier
of a prior registration; the Hashing Module recomputes the hash; the
Client Layer retrieves the stored hash from the Blockchain Layer and
compares the two values, yielding a deterministic binary result.

\begin{figure}[ht]
\centering
\small
\begin{tabular}{l}
\textbf{Registration:}\\
\quad User $\to$ Client Layer $\to$ Hashing Module $\to$ Blockchain Layer\\[0.5em]
\textbf{Verification:}\\
\quad User $\to$ Client Layer $\to$ Hash Comparison $\leftarrow$ Stored Hash\\
\end{tabular}
\caption{System architecture and data flow for registration (top) and
verification (bottom).}
\label{fig:flow}
\end{figure}

\subsection{Client Layer and Local Key Management}
\label{sec:architecture:client}

The Client Layer provides the user-facing interface for file submission,
verification, and result presentation. It accepts files in any format,
forwards raw bytes to the Hashing Module, manages Solana wallet
interactions for transaction signing, and displays results.

A key property of the Client Layer is its handling of cryptographic
keys. During account registration, each snaproot installation generates
a fresh Solana keypair (Ed25519) locally; the wallet is created as an
integral part of the onboarding flow and is presented to the user at
sign-up time. The private key is stored exclusively on the user's mobile device
in the application's local secure storage and is never transmitted to
snaproot servers, cloud services, or any third-party service. All
transaction signing is performed locally on the device; the private
key does not leave the device boundary at any point in the
registration or verification workflow. The Blockchain Layer accepts
only transactions signed by the user's own key. This realizes property~(G5) of the threat model
(Section~\ref{sec:threat:goals}): verification of any record requires
only the file, the transaction identifier, and access to any Solana
RPC endpoint, with no dependency on snaproot infrastructure for
correctness.

The current implementation allows the user to export the private key
from the snaproot wallet; the exported key can be imported into any
standard Solana wallet provider, so that the user retains access to
the wallet identity independently of snaproot. Importing an exported
key back into another snaproot installation is a feature scheduled for
the next release; until then, a user operating snaproot on multiple
mobile devices generates a distinct wallet on each. The user-sovereignty
property of the trust architecture is unaffected either way: the private
key is held by the user, not by snaproot. We return to this point in
Section~\ref{sec:limitations}.

The Client Layer is intentionally stateless with respect to file
content: no file data is retained after a request is processed. Anchor
records and their associated metadata are retained locally for the
user's reference, but the cryptographically relevant data --- hash,
registrant pubkey, timestamp --- lives on Solana.

\subsection{Hashing Module}
\label{sec:architecture:hashing}

The Hashing Module applies SHA-256~\cite{nist2015fips180} to the raw
byte representation of the submitted file. Processing is performed
entirely locally; file contents are never transmitted to any external
service. SHA-256 was chosen for its widespread standardization, its
strong security properties under current cryptographic assumptions,
and its high throughput on commodity hardware. The empirical
characterization of hashing throughput across file sizes is reported in
Section~\ref{sec:evaluation:hashing}.

In the current implementation, the anchor hash is computed directly as
$H_{\text{anchor}} = \mathrm{SHA256}(F)$, where $F$ is the raw file
content. A planned extension introduces an optional user-controlled
salt parameter (Section~\ref{sec:limitations}), permitting independent
parties to anchor identical files without conflict and providing
additional resistance against frontrunning.

\subsection{Blockchain Layer and On-Chain Data Model}
\label{sec:architecture:onchain}

snaproot anchors each registration as a single Solana memo transaction.
There is no dedicated on-chain account per registration: the
information that constitutes a snaproot record --- the file hash and
the metadata --- is carried as the payload of a transaction that
invokes the standard Solana memo program (program identifier
\texttt{MemoSq4gqABAXKb96qnH8TysNcWxMyWCqXgDLGmfcHr}).

The choice of the memo program rather than a dedicated snaproot smart
contract reflects three deliberate design decisions. First,
\emph{no rent-exempt deposit}: dedicated Solana accounts require a
minimum SOL balance to remain rent-exempt; memo transactions require
no such deposit, so the record persists in ledger history without
any ongoing cost. Second, \emph{permissionless verifiability}: the
memo program is a standard, audited Solana program whose behaviour is
publicly known; a verifier needs no knowledge of a proprietary
snaproot contract to retrieve and interpret a record. Third,
\emph{reduced attack surface}: the integrity of the anchoring
mechanism depends only on the correctness of the standard memo
program, not on the security of any snaproot-specific on-chain logic.
The trade-off is that the memo program does not enforce any
application-level constraints on payload structure; snaproot's
\texttt{SR1} version prefix (Section~\ref{sec:architecture:payload})
provides the necessary format discipline at the application layer.

Once a memo transaction is confirmed, it becomes a permanent part of
the Solana ledger and can be retrieved by its transaction signature.

\paragraph{Payload structure and versioning.}\label{sec:architecture:payload} The memo payload is a
UTF-8 plaintext string in a pipe-separated format introduced in
payload format version~SR1. A representative payload has the
structure:
\begin{center}
\texttt{SR1|<SHA256>|<timestamp\_ms>|<gps\_or\_null>|<salt>|<source>|<filetype>}
\end{center}
where \texttt{SR1} is the format version prefix, \texttt{<SHA256>}
is the 64-character hex-encoded SHA-256 hash of the file,
\texttt{<timestamp\_ms>} is the Unix timestamp in milliseconds at
the time of client-side submission, \texttt{<gps\_or\_null>} is the
GPS coordinate string or \texttt{null} if location is unavailable or
denied, \texttt{<salt>} is reserved for the planned optional salt
(Section~\ref{sec:limitations:roadmap}), \texttt{<source>} identifies
the capture method (e.g.\ \texttt{camera}), and \texttt{<filetype>}
identifies the file type (e.g.\ \texttt{foto}). The version prefix
\texttt{SR1} allows future payload formats to be distinguished
unambiguously from the current one; a verifier encountering an
unknown prefix can signal that the record requires a newer client
version. The Solana memo program imposes a payload size limit well in
excess of what a snaproot record requires; a representative
\texttt{SR1} payload occupies approximately 112~bytes.

\paragraph{Authorship and fee payment.} A snaproot transaction is
signed by the user's locally held wallet, which appears in the
transaction as the signer of the memo and therefore as the
cryptographic author of the record (property G3,
Section~\ref{sec:threat:goals}). Transaction fees are paid by a
separate snaproot-operated wallet, the \emph{relay wallet}, which
acts as both the Solana fee payer and a co-signer of every user
transaction. The Solana Program Instruction Logs confirm this
explicitly: each confirmed snaproot memo transaction records
\texttt{Signed by [relay\_wallet]} and \texttt{Signed by
[user\_wallet]}, with the relay wallet tagged as \emph{Fee Payer},
\emph{Signer}, and \emph{Writable}, and the user wallet tagged as
\emph{Signer} only. This structure has four consequences worth noting
explicitly. First, the user does not need to hold SOL to use
snaproot; the relay wallet covers the network fee. Second, the
cryptographic authorship of each record remains with the user,
because the user's wallet --- not the relay wallet --- is the memo
signer, and this binding cannot be altered after confirmation. Third,
the relay wallet's co-signature constitutes a verifiable
\emph{infrastructure authenticity signal}: because the relay
wallet's public key is known and published by snaproot, any verifier
can confirm that a given memo transaction was submitted through the
snaproot relay infrastructure rather than by an independent party
mimicking the memo format. A party wishing to fabricate a
snaproot-like record that passes this check must control the relay
wallet's private key. Fourth, and importantly, a compromise of the
relay wallet would affect only this infrastructure authenticity
signal: it would not affect file integrity~(G1), existence-proof
persistence~(G2), or authorship~(G3), all of which depend solely on
the user's own key material and the immutability of the confirmed
transaction. We discuss this threat boundary explicitly in
Section~\ref{sec:security:relay}. The relay wallet is an operational
co-signer, not a custodian of records.

\paragraph{Implicit on-chain attributes.} Three further attributes of a
registration are derivable from the confirmed transaction itself, without
appearing as separate fields in the payload. The \emph{block time}
recorded by the Solana runtime serves as the registration timestamp; it
is set by consensus, not by the client. The \emph{signer} of the memo
identifies the registrant. The \emph{transaction signature} (tx\_id) is
the canonical handle for later retrieval of the record.

\paragraph{No protocol-level uniqueness.} Because each registration is
an independent memo transaction, there is no protocol-enforced
uniqueness constraint on hashes: the same hash may be anchored in
multiple memo transactions, potentially under different user wallets.
The first confirmed memo transaction for a given hash is the earliest
on-chain existence proof for that file; subsequent anchorings of the
same hash do not displace it, they simply add later existence proofs.
We return to the implications of this in
Section~\ref{sec:security:duplicate}.

\paragraph{Metadata storage.} The metadata, like the hash, is part of
the memo payload and is therefore stored on-chain in cleartext. It is
not part of the SHA-256 input over the file. Being on-chain, the
metadata inherits the same immutability as the rest of the
transaction once confirmed; what it does \emph{not} inherit is a
cryptographic binding to the file, since it does not enter the
hash computation. A planned extension
(Section~\ref{sec:limitations:roadmap}) will let the user choose, per
use case, whether metadata is anchored at all and, if so, in what form.
We discuss the implications of the current design choice in
Section~\ref{sec:security}.

\subsection{Registration Protocol}
\label{sec:architecture:register}

Algorithm~\ref{alg:register} specifies the registration protocol. The
client computes the hash, constructs the memo payload from the hash
and metadata, assembles a Solana transaction with the user's wallet as
the memo signer and the snaproot relay wallet as the fee payer, has it
co-signed and submitted, and waits for confirmation. There is no
on-chain existence check at registration time, since no protocol-level
uniqueness is enforced (Section~\ref{sec:architecture:onchain}).

\begin{algorithm}[ht]
\caption{\textsc{RegisterFile}($F$, \textit{wallet}, \textit{metadata})}
\label{alg:register}
\begin{algorithmic}[1]
\State $H \gets \mathrm{SHA256}(F)$
\State $\textit{payload} \gets \textsc{Encode}(H, \textit{metadata})$
\State $\textit{tx} \gets \textsc{BuildMemoTx}(\textit{payload},$
    \State \quad\quad $\textit{signer} = \textit{wallet},$
    \State \quad\quad $\textit{fee\_payer} = \textit{relay\_wallet})$
\State $\textit{tx} \gets \textsc{SignByUser}(\textit{tx}, \textit{wallet})$
\State $\textit{tx} \gets \textsc{SignByRelay}(\textit{tx}, \textit{relay\_wallet})$
\State $\textit{tx\_id} \gets \textsc{SubmitToBlockchain}(\textit{tx})$
\State $\textsc{AwaitConfirmation}(\textit{tx\_id})$
\State \Return $\textit{tx\_id}$
\end{algorithmic}
\end{algorithm}

\subsection{Verification Protocol}
\label{sec:architecture:verify}

Algorithm~\ref{alg:verify} specifies the verification protocol. The
verifier recomputes the hash from the file, retrieves the memo
transaction from Solana by its transaction signature, extracts the
anchored hash from the memo payload, and compares the two values. The
result is a deterministic binary outcome.

\begin{algorithm}[ht]
\caption{\textsc{VerifyFile}($F$, \textit{tx\_id})}
\label{alg:verify}
\begin{algorithmic}[1]
\State $H_{\text{computed}} \gets \mathrm{SHA256}(F)$
\State $\textit{tx} \gets \textsc{FetchTransaction}(\textit{tx\_id})$
\If{$\textit{tx} = \mathrm{NULL}$}
    \State \Return \textsc{Error}(``Transaction not found on chain'')
\EndIf
\State $\textit{payload} \gets \textsc{ExtractMemoPayload}(\textit{tx})$
\State $H_{\text{stored}} \gets \textsc{ParseHash}(\textit{payload})$
\If{$H_{\text{computed}} = H_{\text{stored}}$}
    \State \Return TRUE \Comment{Integrity confirmed}
\Else
    \State \Return FALSE \Comment{File has been modified}
\EndIf
\end{algorithmic}
\end{algorithm}

The verification protocol takes the transaction signature as input
because it is the canonical handle for the on-chain record. A planned
extension (Section~\ref{sec:limitations:roadmap}) will additionally
support verification when only the file and the user's wallet public
key are known: in that case, the verifier scans the memo transactions
of the user's wallet and tests each one's anchored hash against the
computed hash of the file, succeeding as soon as a match is found.

Verification can be executed against any Solana RPC endpoint: the
official Mainnet endpoints, a self-hosted node, or third-party
block explorers that expose the same RPC interface. snaproot itself
provides a web-based verification frontend as a convenience, but as
established by property~(G5), this frontend is not on the critical
verification path: the cryptographic operations --- hash recomputation
and comparison --- can be performed by any party with access to the
file and the transaction identifier.

\subsection{Capture-Time Anchoring}
\label{sec:architecture:capture}

In addition to the basic register/verify workflow described above, the
snaproot application implements \emph{capture-time anchoring} for
photographs. The workflow is integrated directly into the camera
function of the application:

\begin{enumerate}
\item The user takes a photograph through the snaproot camera interface.
\item The application presents a pre-submission screen showing a preview
of the captured image, an editable filename field, and the current GPS
availability status. If location access has been granted, the GPS
coordinates are shown; if GPS is unavailable or has been denied, the
screen indicates \emph{``GPS not available --- will be snaprooted without
coordinates''} and proceeds without geolocation metadata. The user
confirms by tapping the \emph{snaproot} button.
\item Upon confirmation, the application computes the SHA-256 hash of the
image bytes as delivered by the device operating system's camera API
to the application. This is the precise object of the integrity
guarantee: snaproot anchors \emph{the image bytes as presented to the
application}, not a claim about the physical capture process upstream
of the OS camera API.
\item The hash is immediately submitted to the Blockchain Layer for
anchoring on Solana Mainnet, signed by the installation's local wallet.
\item In parallel, the image is saved to the device's normal photo
library, with the corresponding anchor record (transaction identifier,
on-chain timestamp) retained by the application for later reference.
\end{enumerate}

The operational significance of this workflow is twofold. First, it
narrows the window between content creation and on-chain anchoring to
within the substrate's confirmation latency --- a single-digit number
of seconds on Solana --- so that any adversary wishing to substitute a
manipulated image for the original must do so within this window or
must control the capture device itself. The trust boundary shifts from
``the asserted creator'' to ``the capture pipeline of a specific
device and application instance''.

A precise characterisation of what is and is not proven is important
for avoiding overclaims. snaproot proves that a specific byte sequence
existed at registration time and has not changed since. Applied to a
photograph taken through the snaproot camera interface, it proves
that \emph{the image bytes delivered by the OS camera API to the
application} existed at that time. It does not independently prove
that those bytes originated directly from the physical camera sensor
without intermediate processing by the OS, firmware, or hardware
image signal processor (ISP). In practice, for images captured in the
snaproot camera workflow on unmodified consumer hardware, the gap
between the physical capture and the OS-delivered bytes is narrow and
auditable; but the formal guarantee is scoped to the bytes the
application received, not to the physical light that entered the
lens.

Second, the workflow demonstrates that the Solana latency profile is
a qualitatively different operating regime relative to Bitcoin-based
anchoring systems, enabling this class of application under typical
network conditions. A workflow that required the user to wait approximately
sixty minutes for a Bitcoin confirmation, or ten minutes for an
aggregated anchor to be batched and submitted, would not be acceptable
in the natural photography use case: the user would have moved on, the
shutter context would be lost, and the integration with the device's
normal photo library would break down. Confirmation in the seconds
range, by contrast, fits within the natural shutter-to-preview window
on a smartphone and allows the anchoring to remain invisible to the
user from a workflow perspective.

We discuss the security implications of capture-time anchoring,
including its relationship to the post-publication versus
pre-publication adversary distinction, in Section~\ref{sec:discussion}.

\section{Security Analysis}
\label{sec:security}

We analyze snaproot's security properties along the goals defined in
Section~\ref{sec:threat:goals}. The analysis covers the three realized
tiers of the trust architecture: user sovereignty
(Section~\ref{sec:security:provider}), within-chain integrity
(Sections~\ref{sec:security:integrity}--\ref{sec:security:nonrep}), and
cross-fork verifiability (Section~\ref{sec:security:crossfork}). The
fourth tier --- cross-chain trust migration --- is a prospective design
that is not yet implemented; its security properties are discussed
conceptually in Section~\ref{sec:limitations}.

Goal~(G4) (public verifiability) does not receive a separate subsection
because it follows directly from~(G5): if verification requires only the
file, the transaction identifier, and any functioning RPC endpoint, then
any party who possesses these inputs can independently perform
verification --- which is exactly what (G4) asserts. The argument is
subsumed in Section~\ref{sec:security:provider}.

\subsection{Integrity (G1)}
\label{sec:security:integrity}

\begin{theorem}[File Integrity]
\label{thm:integrity}
Let $F$ be a file registered under hash $H = \mathrm{SHA256}(F)$,
and let $\mathcal{A}$ be a PPT adversary. Assuming SHA-256 is
second-preimage resistant, the probability that $\mathcal{A}$ produces
$F' \neq F$ such that Algorithm~\ref{alg:verify} returns \textsc{True}
on input $(F', \mathit{tx\_id})$ is negligible in the security
parameter~$\lambda$.
\end{theorem}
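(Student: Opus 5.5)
The proof is a standard reduction. Assume a PPT adversary $\mathcal{A}$ succeeds with non-negligible probability $\epsilon(\lambda)$. We build a second-preimage finder $\mathcal{B}$ against SHA-256 that succeeds with essentially the same probability.

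The first step fixes what the verification game is, and this is where the modelling assumptions enter. Algorithm~\ref{alg:verify} returns \textsc{True} on $(F',\mathit{tx\_id})$ exactly when two things hold: $\mathrm{SHA256}(F')$ equals the hash parsed from the memo payload of the transaction identified by $\mathit{tx\_id}$, and that transaction exists. The value $H_{\text{stored}}$ must be the $H=\mathrm{SHA256}(F)$ written at registration, and $\mathcal{A}$ must not be able to change it. For this we invoke the Solana safety assumption of Section~\ref{sec:threat:assumptions}: confirmed blocks are not reverted and the memo payload is immutable. We also use the out-of-scope clause of Section~\ref{sec:threat:adversary}: $\mathcal{A}$ cannot compromise consensus, and the standard memo program returns the payload faithfully. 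Under these assumptions, $\textsc{FetchTransaction}$ followed by $\textsc{ParseHash}$ deterministically yields $H$.

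Two details in this step need explicit handling.

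\begin{itemize}
\item Hash encoding: the 64-character hex encoding is injective, so comparing hex strings is the same as comparing digests.
\item Pinned identifier: the statement pins $\mathit{tx\_id}$ to the registration of $F$. An adversary who anchors a hash of their own under a different identifier is therefore outside this game; that case belongs to duplicate registration in Section~\ref{sec:security:duplicate}.
\end{itemize}

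I expect this step to be the main obstacle. The theorem is only as strong as the claim that the on-chain value cannot be influenced, and that claim rests on an assumption rather than on cryptography. I would state the reduction conditional on the safety event. Any attack that alters $H_{\text{stored}}$ then counts as a consensus break, which is excluded by assumption.

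Given this, the reduction is direct. $\mathcal{B}$ receives a challenge file $F$ from the second-preimage game. It simulates the public environment for $\mathcal{A}$: it computes $H=\mathrm{SHA256}(F)$, builds an \texttt{SR1} payload with arbitrary metadata, and produces a simulated confirmed transaction with identifier $\mathit{tx\_id}$. $\mathcal{B}$ can generate the user and relay keypairs itself, so the signatures are perfectly distributed. It then hands $\mathcal{A}$ the public view: the ledger, $H$, $W$ and the metadata. Because the second-preimage game gives the adversary $F$ itself, $\mathcal{B}$ may pass $F$ to $\mathcal{A}$ as well; this only strengthens $\mathcal{A}$.

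When $\mathcal{A}$ outputs $F'$, $\mathcal{B}$ outputs $F'$ too. Whenever verification returns \textsc{True} with $F'\neq F$, we have $\mathrm{SHA256}(F')=H=\mathrm{SHA256}(F)$, which is a valid second preimage. Hence
\[
\Pr[\mathcal{B}\text{ wins}] \ge \epsilon(\lambda),
\]
with polynomial overhead. This contradicts second-preimage resistance, so $\epsilon$ is negligible.

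A remark on the choice of $F$: it should be drawn from the challenge distribution, matching the fixed-$F$ formulation in Section~\ref{sec:threat:assumptions}. For adversarially chosen $F$, as in the malicious-registrant case, the same argument goes through under collision resistance instead.
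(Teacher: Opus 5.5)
Your proposal is correct and uses the same argument as the paper's proof sketch: a \textsc{True} outcome with $F' \neq F$ is by definition a second preimage of $H$, so $\mathcal{A}$'s success probability is bounded by its second-preimage advantage. Your additions are the explicit simulating reduction and, more usefully, an explicit statement of what the paper's ``if and only if'' silently presupposes, namely that \textsc{FetchTransaction} and \textsc{ParseHash} return exactly $H$; this rests on the paper's Solana safety assumption and a faithful chain read, not on SHA-256, and it is not among the theorem's stated hypotheses.
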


\begin{proof}[Proof sketch]
Algorithm~\ref{alg:verify} returns \textsc{True} if and only if
$\mathrm{SHA256}(F') = \mathrm{SHA256}(F) = H$. For $F' \neq F$, this
requires $\mathcal{A}$ to find a second preimage of $H$ under SHA-256.
By the second-preimage resistance assumption, no PPT adversary can do
so with non-negligible probability. Therefore the probability of a
false \textsc{True} is bounded by
$\mathsf{Adv}^{\mathrm{2pre}}_{\mathrm{SHA256}}(\mathcal{A}, \lambda)$,
which is negligible.
\end{proof}

In the stronger sub-case where the registrant is adversarial, the
relevant assumption is collision resistance: a malicious registrant
may commit to $F$ and later attempt to substitute a precomputed $F'$
with $\mathrm{SHA256}(F') = \mathrm{SHA256}(F)$. Collision resistance
of SHA-256, a strictly stronger property, is also widely assumed to
hold~\cite{nist2015fips180}. The integrity guarantee therefore covers
both the honest-registrant and malicious-registrant cases under
their respective hardness assumptions.

\subsection{Existence-Proof Persistence (G2)}
\label{sec:security:existence}

A central design property of snaproot is the separation between the
existence proof for a registered file and the authorship proof. We
make this property explicit here because it differs from the security
model of many alternative systems, in which both proofs collapse onto
the same secret.

The existence proof for a registered file $F$ is the verifiable claim
that ``a file with hash $H = \mathrm{SHA256}(F)$ existed at time $t$,
with $t$ established by the on-chain block timestamp of the
registration''. We claim that this proof remains verifiable for any
party with access to $F$ and to any archive of $B$ containing the
record, \emph{regardless of the continued availability of the
registrant's private key}.

The argument is direct. The verification procedure of
Algorithm~\ref{alg:verify} requires only $F$ and the transaction
identifier; it does not require a signature operation by the
registrant. The transaction identifier and the corresponding record
are public information on the blockchain, accessible to any party
with read access to $B$. The integrity of the record (G1) ensures that
the value $H$ retrieved from $B$ matches the registered hash, and the
hash comparison $\mathrm{SHA256}(F) \stackrel{?}{=} H$ is a public
computation that does not require any secret material from the
registrant. The existence proof is therefore structural: it inherits
its persistence from the persistence of $B$, not from any party's key
custody.

The practical consequences are substantial. A verifier who acquires a
file $F$ years after its original registration can establish the
existence proof autonomously, even if the original registrant has long
since lost their private key, lost their device, or ceased to be
identifiable. snaproot thereby provides a class of integrity claims
that survive scenarios in which conventional signature-based systems
would silently fail.

\subsection{Non-Repudiation and the Authorship Proof (G3)}
\label{sec:security:nonrep}

The authorship proof is the cryptographic binding between a record and
the wallet identity $W$ that signed it. This proof is established at
registration time: the transaction is signed by the private key
corresponding to $W$, and the Solana consensus mechanism only
incorporates transactions with valid signatures. Once a transaction is
confirmed on $B$, it becomes part of the immutable distributed ledger.
Solana's consensus mechanism, built on Proof of History together with
a Tower-BFT-style finalization protocol~\cite{yakovenko2018solana},
requires agreement from more than two-thirds of staked validators to
finalize a block. Reversing a confirmed record would therefore require
compromising this supermajority, which represents a prohibitive
economic cost. The registrant's public key is permanently bound to the
hash record, providing cryptographic non-repudiation of the
registration event.

Unlike (G2), this property is key-dependent. If the registrant retains
control of the private key corresponding to $W$, the authorship binding
is indefinitely demonstrable: the registrant can produce a fresh
signature under $W$ and thereby prove ongoing possession of the
identity that signed the original record. If the private key is lost,
the authorship binding becomes unprovable: any party with knowledge of
the historical record can verify that some entity signed under $W$,
but the original registrant can no longer demonstrate that they are
that entity through cryptographic means alone.

It is important to note what this loss \emph{does not} entail. The
record itself does not change. No third party gains the ability to
claim authorship retroactively: such a claim would require either
producing the lost private key, which is by assumption infeasible, or
forging a signature, which reduces to breaking the underlying signature
scheme. The loss of authorship provability is a loss of \emph{personal}
demonstrability, not a transfer of the record's cryptographic content.

We discuss in Section~\ref{sec:discussion} a user-level workflow ---
the identity anchoring pattern --- that allows users to mitigate this
limitation through the existing system, by anchoring an identity
attestation as their first record under a new wallet.

\subsection{Provider Independence (G5)}
\label{sec:security:provider}

Property (G5) asserts that verification requires only the file, the
transaction identifier, and access to any functioning Solana RPC
endpoint. It does not require interaction with snaproot's
infrastructure. We make the justification explicit.

The verification procedure (Algorithm~\ref{alg:verify}) consists of
three steps: hash recomputation, record retrieval, and value
comparison. Hash recomputation is local to the verifier and depends on
no external party. Record retrieval is a read against a public
blockchain and may be served by any Solana RPC endpoint, whether
operated by snaproot, by an unrelated third party, by a self-hosted
node, or by any other source exposing the RPC interface. Value
comparison is a deterministic local operation.

It follows that the verification path contains no snaproot-specific
trust assumption. The convenience web frontend operated by snaproot
performs exactly these three steps, but its existence is not on the
critical path: the cryptographic content of the verification answer
does not change if the frontend is unavailable, untrusted, or replaced
by an alternative implementation. Should snaproot as an organization
cease to operate, existing records remain fully verifiable through any
other RPC access path.

This property is what gives snaproot's trust architecture its first
tier (user sovereignty): the user's records belong to the user, in a
sense that survives the disappearance of the system that helped
create them.

\subsection{Cross-Fork Verifiability (G6)}
\label{sec:security:crossfork}

Public blockchains undergo protocol changes, fork events, and recovery
operations over their operational lifetime. Solana in particular has
experienced multiple network restarts and protocol-level disruptions
between 2022 and 2024. For long-term integrity claims, the relevant
question is not whether such events occur, but what they imply for
records already confirmed at the time they occur.

We distinguish between two aspects of decentralization. \emph{Liveness}
concerns the ability of the network to confirm new transactions; it is
sensitive to validator behavior at the time of submission. \emph{Safety}
concerns the durability of records already confirmed; it is sensitive to
the post-hoc behavior of validators with respect to chain history. For
snaproot, the relevant property for existing records is safety, not
liveness: once a record has been confirmed and a block has been
finalized, the question for verification is whether some archive of
that block remains accessible, not whether the network continues to
accept new transactions.

Under this distinction, even an event that severely affects Solana's
liveness --- such as a network restart with coordinated validator
participation --- does not directly invalidate past records. The
pre-event chain history persists on the archives of all parties that
held it before the event, including independent block explorers,
validator operators, and third-party indexing services. A verifier
seeking to check a pre-event record consults any such archive; if
multiple archives agree on the record, the verifier obtains a strong
existence guarantee. If they disagree --- for example, if a fork
produced two competing chain versions --- the discrepancy is itself a
verifiable signal of tampering, not a hidden corruption.

This is a strictly weaker property than the corresponding guarantee on
substrates with a longer-running and more decentralized archive
ecosystem (notably Bitcoin), and we do not claim otherwise. Within
these limits, the property rests primarily on the diversity and
durability of independent archives of the Solana chain history. For
horizons that extend beyond the lifetime of any specific chain, the
prospective cross-chain migration mechanism discussed in
Section~\ref{sec:limitations:outlook} is intended to serve as an
additional safety net rather than as the load-bearing argument for
long-term existence proofs.

\subsection{Multiple Anchorings of the Same Hash}
\label{sec:security:duplicate}

Because each registration is an independent memo transaction
(Section~\ref{sec:architecture:onchain}), snaproot does not enforce
protocol-level uniqueness on the registered hash. The same hash may
appear in several memo transactions, possibly signed by different user
wallets and at different times. We discuss the implications.

For existence claims, the relevant property is that the
\emph{earliest} confirmed memo transaction for a given hash establishes
the earliest provable existence time of the file under any wallet.
Later anchorings of the same hash do not displace earlier ones; they
simply add further existence proofs at later timestamps. A verifier
who wishes to determine the earliest on-chain existence of a particular
file scans for the earliest confirmed memo transaction whose payload
matches the file's hash; if multiple wallets have anchored the same
hash, all of those transactions remain visible.

For non-repudiation claims (G3), the absence of a uniqueness constraint
does not weaken the binding between a specific anchor and its signing
wallet: each memo transaction is signed by its registrant's wallet,
and that signature cannot be transferred or forged without the
corresponding private key. Multiple wallets may anchor the same hash,
but no wallet can claim authorship of an anchor signed by a different
wallet.

This design choice has an intentional consequence: independent parties
can independently attest to the existence of the same content (for
example, a journalist and an archival service both anchoring the same
image), and all such attestations remain visible on-chain. A planned
salt mechanism (Section~\ref{sec:limitations:roadmap}) will allow
parties who instead want their anchors to be cryptographically distinct
to compute $H_{\text{anchor}} = \mathrm{SHA256}(F\,\|\,s)$ for a
user-controlled salt $s$, giving different on-chain anchor values for
the same underlying file.

We note that this discussion concerns the protocol layer; classical
replay attacks at the transaction level are handled by Solana's
standard mechanisms, and the frontrunning consideration identified in
Section~\ref{sec:threat:adversary} is discussed there.

\subsection{Timestamp Integrity}
\label{sec:security:timestamp}

The registration timestamp of a snaproot anchor is the block time of
the memo transaction, set by the Solana runtime at the moment of block
inclusion. It is not part of the memo payload and the client cannot
influence it. This prevents a registrant from backdating a registration
by manipulating any client-side value. The on-chain clock is a
consensus-level value, shared across all validators, and is
subject to the same safety properties as the rest of the block content.
Clock drift on Solana is a known phenomenon and bounded in practice
to the order of minutes; for integrity anchoring, this granularity is
sufficient, but we note that snaproot timestamps are not appropriate
for use cases that require sub-second precision or strict
legally-binding time attestation.

\subsection{Infrastructure Authenticity via the Relay Wallet}
\label{sec:security:relay}

As noted in Section~\ref{sec:architecture:onchain}, every snaproot
transaction carries two signatures: the user's wallet as the memo
signer, and the snaproot relay wallet as the fee payer. The relay
wallet's public key is fixed, known, and published by snaproot. This
structure gives rise to a three-layer authenticity model.

The first layer is \emph{structural}: any party can construct a
Solana memo transaction with the same payload format as snaproot,
invoking the same memo program. This layer provides no
snaproot-specific authenticity signal.

The second layer is \emph{infrastructural}: a verifier who checks that
the fee payer of a memo transaction matches the published snaproot relay
wallet address can confirm that the transaction was submitted through
the snaproot relay infrastructure. An adversary who wishes to
fabricate a record that passes this check must control the relay
wallet's private key --- which, under the assumption that the relay
wallet is not compromised, they do not. This layer provides a
verifiable binding to the snaproot deployment, at the cost of
introducing a dependency on the relay wallet's continued integrity.

The third layer is \emph{user-cryptographic}: the user's wallet signs
the memo independently of the relay wallet. This layer is
unconditional: it does not depend on the relay wallet and is
established by the user's own key material. It provides the
non-repudiation and authorship guarantees analyzed in
Section~\ref{sec:security:nonrep}.

The three layers are independent and compose naturally. A verifier
seeking maximum assurance checks all three: that the memo payload
carries a recognised version prefix (\texttt{SR1} in the current
format), that the fee payer matches the known relay wallet, and that
the memo signer matches the claimed registrant wallet. A verifier
who does not trust or cannot reach snaproot infrastructure can still
rely on the third layer alone, which is sufficient for the core
integrity and existence-proof guarantees.

\paragraph{The relay wallet is not part of the cryptographic trust root.}
The cryptographic trust root of snaproot consists of three elements:
the second-preimage resistance of SHA-256, the safety of Solana
consensus for confirmed blocks, and the user's private key. The relay
wallet is not among them. It is an operational co-signer whose
presence enables the infrastructure authenticity signal of the second
layer, but whose absence or compromise leaves the trust root intact.
This is the precise sense in which snaproot's integrity guarantees
are \emph{provider-independent}: the provider controls the relay
wallet, but the provider does not control the trust root.

\paragraph{Relay wallet compromise: threat boundary.}
A compromise of the relay wallet's private key would allow an
adversary to submit memo transactions that carry the relay wallet's
co-signature and thus pass the infrastructure authenticity check of
the second layer. This is a serious operational security concern.
However, its impact on the system's core guarantees is strictly
bounded: a relay-wallet compromise does \emph{not} affect file
integrity~(G1), because the integrity guarantee reduces to the
second-preimage resistance of SHA-256, which is independent of any
wallet; it does \emph{not} affect existence-proof
persistence~(G2), because a confirmed transaction's existence on the
ledger is independent of which wallet paid for it; and it does
\emph{not} affect authorship~(G3), because the user's wallet
signature on the memo is separate from the relay wallet's signature
and cannot be forged without the user's private key. The compromise
affects only the \emph{infrastructure authenticity signal}: a
verifier relying on the relay wallet co-signature as proof that a
transaction originated from snaproot's infrastructure would be
misled. Users who are aware of a relay wallet compromise should
revert to third-layer verification only. We note that if snaproot
rotates its relay wallet as a key management practice, a published
registry of historical relay wallet addresses is needed to support
retrospective second-layer checks.

\subsection{Priority of Registration}
\label{sec:security:priority}

Because snaproot does not enforce protocol-level uniqueness on hashes
(Section~\ref{sec:security:duplicate}), multiple parties may anchor
the same hash independently. In contexts where priority matters ---
for instance, establishing which party registered a file first --- the
relevant criterion is the block time of the earliest confirmed memo
transaction whose payload contains the hash in question.

A verifier wishing to establish priority scans all memo transactions
on Solana whose payload matches a given hash and identifies the one
with the lowest block time. The result is a provable, chain-resident
priority claim: the earliest anchor for hash $H$ establishes the
earliest on-chain existence of the corresponding file, under whichever
wallet submitted that anchor.

This priority check is currently a manual operation requiring an RPC
scan. A planned feature (Section~\ref{sec:limitations:roadmap})
will expose this query directly in the snaproot application, allowing
a user to confirm at anchoring time --- or at any later point --- whether
their registration is the earliest on-chain record for a given file.
The completeness of the priority determination depends on the
completeness of the RPC node's transaction index; gaps in indexing
could in principle cause an earlier anchor to be overlooked. Users for
whom priority is legally or commercially significant should therefore
retain their transaction identifier as primary evidence and treat the
priority scan as a supporting, not conclusive, signal.

\subsection{Scope Limitation}
\label{sec:security:scope}

snaproot provides integrity guarantees only relative to the state of a
file at the time of its registration. A file that was manipulated or
synthetically generated before registration will be anchored by the
hash of that already-altered content, and subsequent verifications will
confirm it as unchanged relative to that state. This limitation is
inherent to any post-hoc anchoring
system. The capture-time anchoring workflow
(Section~\ref{sec:architecture:capture}) narrows but does not eliminate
this gap by shifting the relevant trust boundary to the capture
pipeline; we discuss the implications in Section~\ref{sec:discussion}.

\section{Evaluation}
\label{sec:evaluation}

We evaluate snaproot empirically along three dimensions: hashing
performance as a function of file size, end-to-end registration and
verification latency, and per-operation cost on Solana. All
measurements reported in this section were taken directly against the
current production snaproot backend on Solana Devnet; we treat the
empirical latency figures as Devnet-only and leave a formal
characterization on Solana Mainnet under realistic load to future
work. Because the Solana memo program executes deterministically and
the per-signature base fee is identical on Devnet and Mainnet, the
cost figures in Table~\ref{tab:cost} are directly applicable to the
Mainnet deployment up to the SOL/USD reference price at the time of
publication. The implications of the Devnet/Mainnet distinction are
discussed in Section~\ref{sec:evaluation:devnet-vs-mainnet}.

\subsection{Experimental Setup}
\label{sec:evaluation:setup}

Experiments were conducted on a workstation running Windows~11,
equipped with an Intel Core i5-13500H CPU (with SHA-NI hardware
acceleration) and 16\,GB of RAM. The snaproot backend under test is
the production PHP\,/\,Laravel backend (PHP~8.2.12, Laravel) backed by
MySQL: the same backend that serves the snaproot mobile application and
web frontend in the current deployment. The backend was started locally
with the bundled development server (\texttt{php artisan serve}) bound
to \texttt{127.0.0.1:8000}, and blockchain interactions were performed
against the Solana Devnet network using its public RPC endpoint.

All measurements reported in Sections~\ref{sec:evaluation:hashing}
through~\ref{sec:evaluation:cost} were produced by this
PHP\,/\,Laravel backend. The benchmark harness is a separate process
that drives the backend exclusively through its public HTTP API,
exactly as the mobile client does in production, and reports
per-call wall-clock latencies measured at the harness.

File hashing experiments used SHA-256 as in the production client.
Random files were generated for five size categories: 1\,KB, 1\,MB,
10\,MB, 100\,MB, and 500\,MB; for each category, 50 independent trials
were performed. For end-to-end performance, 200 registration and 200
verification requests were issued. Registration transactions were
submitted to Solana Devnet and confirmed before being counted; the
verification operations measured are backend-cached verifications,
discussed in Section~\ref{sec:evaluation:latency}. Throughput figures
are derived from the measured mean single-request latency.

\subsection{Hashing Performance}
\label{sec:evaluation:hashing}

Table~\ref{tab:hashing} summarises SHA-256 hashing times per file size
category. We report the sample mean, standard deviation, minimum,
maximum, and the half-width of the 95\,\% confidence interval for the
mean ($\mathrm{CI}_{95}$), computed as $1.96 \cdot s/\sqrt{n}$ with
$s$ the sample standard deviation and $n = 50$.

\begin{table}[ht]
\centering
\small
\caption{SHA-256 hashing performance on the snaproot PHP backend host
(50 trials each). $\mathrm{CI}_{95}$ is the half-width of the 95\,\%
confidence interval for the mean.}
\label{tab:hashing}
\begin{tabular}{lrrrrr}
\toprule
\textbf{File size} & \textbf{Mean (ms)} & \textbf{Std (ms)} &
\textbf{Min (ms)} & \textbf{Max (ms)} & \textbf{$\mathrm{CI}_{95}$ (ms)} \\
\midrule
1\,KB    &   0.081  &  0.032  &   0.060  &   0.215  & 0.009 \\
1\,MB    &   1.041  &  0.195  &   0.940  &   1.869  & 0.054 \\
10\,MB   &  11.570  &  1.437  &  10.904  &  20.604  & 0.398 \\
100\,MB  & 118.405  &  8.764  & 109.516  & 157.871  & 2.430 \\
500\,MB  & 609.175  & 20.470  & 568.458  & 679.092  & 5.676 \\
\bottomrule
\end{tabular}
\end{table}

Hashing throughput is approximately linear in file size across the
range tested, with the 1\,MB through 500\,MB measurements falling on a
near-perfect line and an effective throughput of approximately
0.8\,GB/s. The confidence intervals are narrow relative to the means,
indicating that hashing is a well-behaved deterministic process with
little variability beyond memory subsystem effects; the SHA-NI
instruction set extension on the test CPU keeps the hashing cost below
130\,ms for any file under 100\,MB. For all file sizes under 100\,MB,
hashing completes within a small fraction of the Solana confirmation
latency reported in the next section, confirming that hashing is not
the bottleneck in the end-to-end registration latency budget for
typical document and image files.

\subsection{Transaction Latency and Throughput}
\label{sec:evaluation:latency}

Table~\ref{tab:latency} summarises end-to-end performance for
registration and verification, measured on the PHP\,/\,Laravel backend
over 200 trials. The interpretation of these two operations requires
care.

\begin{table}[ht]
\centering
\small
\caption{End-to-end performance of the snaproot PHP\,/\,Laravel backend
on Solana Devnet (200 trials each). $\mathrm{CI}_{95}$ is the
half-width of the 95\,\% confidence interval for the mean. Verification
figures reflect backend-cached verification; see discussion below.}
\label{tab:latency}
\begin{tabular}{lrr}
\toprule
\textbf{Metric}                  & \textbf{Registration} & \textbf{Verification} \\
\midrule
Mean latency (ms)                & 3{,}145.632           & 351.479   \\
Std (ms)                         &   924.987             &  93.797   \\
$\mathrm{CI}_{95}$ on mean (ms)  &   128.181             &  13.000   \\
P95 latency (ms)                 & 4{,}659.069           & 442.763   \\
P99 latency (ms)                 & 5{,}803.387           & 457.840   \\
Max observed (ms)                & 6{,}040.142           & 475.025   \\
Sustained throughput (ops/min)   &    19.072             & 170.706   \\
\bottomrule
\end{tabular}
\end{table}

Registration latency is dominated by Solana transaction confirmation:
the backend constructs and signs the memo transaction in well under one
hundred milliseconds, but must then wait for the transaction to be
included in a block and for that block to reach the confirmed commitment
depth. The measured mean of 3.15\,s with a P99 of 5.80\,s is consistent
with the published Solana confirmation profile, plus a modest additional
overhead attributable to the PHP\,/\,Laravel request-handling stack and
the synchronous database write that records the resulting hash-log entry
before responding to the client. This figure is the relevant one for
use cases that rely on synchronous confirmation, including the
capture-time anchoring workflow
(Section~\ref{sec:architecture:capture}). The standard deviation of
0.92\,s and the gap between the P95 (4.66\,s) and the mean reflect
natural variability in Solana slot scheduling rather than backend-side
jitter, as evidenced by the long tail being almost entirely due to
slot-confirmation outliers.

\paragraph{Verification mode.} The reported verification rate of
approximately 171 operations per minute reflects backend-cached
verification: when the snaproot backend has previously indexed a
registration, a subsequent verification request is served by comparing
the locally recomputed hash against the indexed record without any
round-trip to the Solana network. This figure characterises the
backend's PHP request-handling, database, and HTTP layer rather than
the throughput of on-chain reads. The mean latency of 351\,ms is
dominated by the PHP\,/\,Laravel request lifecycle and by the two
MySQL writes that the verification path performs (the verification log
entry and the audit log entry); it is therefore an upper bound on what
a more lightly instrumented verifier would observe.

Direct on-chain verification --- the protocol described in
Algorithm~\ref{alg:verify} --- requires an RPC query per verification
and is bounded by RPC rate limits and network round-trip time. Public
Solana RPC endpoints typically rate-limit at the order of $10^2$
requests per second; with the latency of an RPC call dominated by
network round-trip, sustained verification throughput from a single
client against a public endpoint is on the order of $10^1$ to $10^2$
verifications per second per client.\footnote{Higher rates are
straightforwardly attainable through self-hosted RPC nodes or contracts
with commercial RPC providers, but these increase operational
dependencies.} For the user-facing verification scenarios that motivate
snaproot --- a journalist checking a photograph, an auditor sampling a
logfile, a court verifying a piece of evidence --- the relevant figure
is single-request latency rather than aggregate throughput, and direct
on-chain verification is comfortably within acceptable bounds.

This distinction is important for the trust architecture: the
provider-independent verification path (G5,
Section~\ref{sec:threat:goals}) is always available, but the
high-throughput verification figure reported in
Table~\ref{tab:latency} is specifically an operational property of the
snaproot backend, not of the system's trust guarantees.

\subsection{Cost Analysis}
\label{sec:evaluation:cost}

The cost structure of a snaproot registration consists of the Solana
transaction fee paid per memo transaction. Because snaproot uses the
memo program rather than dedicated on-chain accounts, there is no
rent-exempt deposit per record: the memo transaction is included in the
ledger and persists with it, without requiring a long-lived account to
be funded. Each registration transaction carries two ed25519 signatures:
one by the snaproot relay wallet (the fee payer) and one by the user's
wallet (the registrant). The full per-registration cost is therefore
the network fee for one two-signature transaction, paid by the snaproot
relay wallet on behalf of the user.

Table~\ref{tab:cost} reports values measured from a representative
Devnet memo transaction produced by the PHP backend during the latency
benchmark of Section~\ref{sec:evaluation:latency}. The compute-units
figure is reported by the Solana runtime; the per-signature base fee is
fixed at 5{,}000~lamports by Solana and is identical on Devnet and
Mainnet. The USD figures use a SOL reference price of \$82.45
(CoinGecko spot, 2026-05-30); they scale linearly with that price.

\begin{table}[ht]
\centering
\small
\caption{Transaction cost components for a snaproot memo registration
on Solana, measured on Devnet against the PHP backend. There is no
rent-exempt component, since the memo program does not create a
persistent account per record.}
\label{tab:cost}
\begin{tabular}{lr}
\toprule
\textbf{Cost factor}                                       & \textbf{Value} \\
\midrule
Memo payload size                                          & 219 bytes \\
Compute units consumed                                     & 92{,}246 \\
Signatures per transaction                                 & 2 (relay + user) \\
Base transaction fee (per signature)                       & 5{,}000 lamports = $0.000005$ SOL \\
Priority fee (current deployment)                          & 0 \\
Total fee per registration                                 & 10{,}000 lamports = $0.000010$ SOL \\
Approximate USD equivalent (SOL @ \$82.45)                 & \$0.000825 \\
1{,}000 registrations (total fee, USD)                     & \$0.8245 \\
1{,}000{,}000 registrations (total fee, USD)               & \$824.50 \\
Sample Devnet tx signature                                 &
  \texttt{31Eg1ARz{\ldots}ri9cWQ}\footnote{Full signature:
  \texttt{\seqsplit{31Eg1ARzNTUXcKNApMDRdfX6imSXV4mUw4NojegAymbD6b98hU7iMSyf6uWccwkvacYNVf7zzuzLH12ypwri9cWQ}}.} \\
\bottomrule
\end{tabular}
\end{table}

The relay-wallet model makes the cost analysis straightforward for both
the user and the operator. The user incurs no SOL cost at all, since
they do not pay the fee directly. The operator's aggregate cost scales
linearly with the number of registrations, dominated entirely by the
per-transaction fee. There is no separate storage or account-rent
component that scales with the number of records, in contrast to
designs that allocate one Solana account per registration.

For high-volume deployments, this cost structure is well-suited to
amortisation through Merkle aggregation
(Section~\ref{sec:limitations}): a single memo transaction whose
payload is a Merkle root commits to many file hashes at once, reducing
the aggregate fee paid by the relay wallet by the corresponding factor.

\subsection{Comparison to Related Systems}
\label{sec:evaluation:comparison}

We compare snaproot to the most widely deployed free hash-anchoring
service, OpenTimestamps (OTS), along two axes: a quantitative
head-to-head comparison of submission latency, proof artifact size,
and infrastructure availability, measured on the same workstation
against the live public calendar servers
(Section~\ref{sec:evaluation:comparison:quantitative}); and a
qualitative use-case suitability matrix that also incorporates
OriginStamp and Chainpoint, for which we did not perform an end-to-end
re-measurement
(Section~\ref{sec:evaluation:comparison:qualitative}). OriginStamp
and Chainpoint are kept qualitative because the former is rate-limited
on its free tier and the latter no longer maintains a stable
public-hosted anchoring endpoint; in both cases a fair quantitative
comparison would require operating a paid or self-hosted instance,
which we leave to future work.

\subsubsection{Quantitative Comparison: snaproot vs.\ OpenTimestamps}
\label{sec:evaluation:comparison:quantitative}

We implemented a benchmark client that mirrors the synchronous portion
of the OpenTimestamps stamping protocol: for each of $N = 50$ randomly
generated SHA-256 digests, the client issues an HTTP POST to each of
the three official public calendar servers (\textit{alice},
\textit{bob}, \textit{finney}) and waits for the serialised provisional
timestamp tree. The calendars are queried in parallel on independent
threads, mirroring the behaviour of the reference \texttt{otsclient}.
We report two latency modes:

\begin{itemize}
\item \textit{Quorum 2-of-3}: time until the second-fastest calendar
has responded. This is the latency that an \texttt{otsclient} using
the default $m = 2$ quorum experiences before it considers the proof
complete.
\item \textit{Fastest single calendar}: time until the first calendar
responds. This is the latency of an \texttt{otsclient} configured to
use a single calendar (lower latency, no redundancy against calendar
failure).
\end{itemize}

We also attempted to measure a third mode in which the client waits
for all three calendars to respond, but were unable to do so: one of
the three official calendars
(\texttt{finney.calendar.\allowbreak eternitywall.com}) was unreachable for all 50
of our requests, so no trial completed the full three-of-three
rendezvous. We treat that observation itself as a measurement, reported
in the \textit{Availability} row of Table~\ref{tab:ots-comparison}.

Table~\ref{tab:ots-comparison} compares snaproot's end-to-end
registration latency (from Table~\ref{tab:latency}) to the OTS
submission latency measured on the same workstation, plus
proof-artifact and availability figures. Both systems were exercised
over commodity public Internet from the same workstation; snaproot
used Solana Devnet via the public RPC endpoint and OTS used the
official public calendar servers.

\begin{table}[!htbp]
\centering
\small
\caption{Head-to-head quantitative comparison of snaproot and
OpenTimestamps (snaproot $N = 200$ against the PHP backend, OTS
$N = 50$ against the public calendars, same workstation, commodity
public Internet). Submission latency is wall-clock time from holding a
SHA-256 digest to holding a usable proof artifact. Anchor-immutability
latency is the additional delay before the proof commits to a confirmed
block on the underlying chain.}
\label{tab:ots-comparison}
\begin{tabularx}{\linewidth}{X r r r}
\toprule
\textbf{Metric} & \textbf{snaproot} & \textbf{OTS (quorum 2-of-3)} &
\textbf{OTS (fastest single)} \\
\midrule
Submission latency, mean (ms)         & 3{,}146 & 2{,}652 & 2{,}178 \\
Submission latency, P95 (ms)          & 4{,}659 & 4{,}244 & 3{,}740 \\
Submission latency, P99 (ms)          & 5{,}803 & 8{,}653 & 8{,}608 \\
Submission latency, max observed (ms) & 6{,}040 & 8{,}653 & 8{,}608 \\
95\,\% CI half-width on the mean (ms) &   128   &   301   &   301   \\
\midrule
\multicolumn{4}{l}{\emph{Qualitative properties at the system boundary}} \\
Anchor-immutability
  & 0 (on-chain at return)
  & $\sim$10\,min ($\to\sim$60\,min)
  & same \\
Proof artifact size
  & tx-sig + 32\,B hash
  & 351\,B mean (272--517\,B)
  & 169\,B mean \\
Direct user cost
  & \$0 (relay: \$0.000825)
  & \$0 (calendar absorbs fee)
  & \$0 \\
Availability over benchmark run
  & Devnet RPC: 100\,\%
  & \textit{alice}+\textit{bob}: 100\,\%; \textit{finney}: 0\,\%
  & n/a \\
Trust model
  & Solana validators
  & Calendar(s) + Bitcoin
  & Single cal.\ + Bitcoin \\
Capture-time anchoring
  & Yes (final at return)
  & No (provisional)
  & No \\
\bottomrule
\end{tabularx}
\end{table}

Three observations follow from Table~\ref{tab:ots-comparison}. First,
the submission-latency means are in the same order of magnitude
(3{,}146\,ms for snaproot vs.\ 2{,}652\,ms for OTS in quorum mode),
with snaproot approximately 19\,\% higher at the mean. This is not a
like-for-like comparison: snaproot's 3.15\,s is the time to a
Solana-confirmed memo transaction, returned by a PHP\,/\,Laravel
backend that additionally performs a synchronous MySQL write of the
hash-log record before responding to the client; OTS's 2.65\,s is the
time to a provisional timestamp tree returned by a calendar server,
which is not anchored to any blockchain until the next Bitcoin block
contains the calendar's aggregated commitment, approximately ten
minutes later in expectation and approximately one hour later if a
six-block confirmation policy is required (the
\textit{Anchor-immutability} row of
Table~\ref{tab:ots-comparison}). The two systems therefore deliver
different artefacts at the moment they return; the comparable latency
suggests that snaproot's stronger guarantee comes at no meaningful
end-to-end cost.

Second, snaproot has substantially tighter tail latency. The P99 of
snaproot registration is 5.80\,s, against 8.65\,s for OTS in both
modes. We attribute this to the structural difference between Solana's
slot-driven confirmation (a confirmed slot arrives every 400\,ms in
steady state) and the OTS calendar pipeline (an HTTP request whose
response time is a function of calendar-server load and Internet
round-trip variability).

Third, and most concretely, the failure of one of the three official
OpenTimestamps calendars to respond to any of our 50 requests is an
empirical instance of the calendar-availability assumption becoming a
calendar-availability dependency. A 2-of-3 quorum policy tolerates
exactly this failure mode and the user is unaffected at submission
time; a strict 3-of-3 policy would fail outright. snaproot has no
analogous third-party aggregator: the Solana validator set was at full
availability over the same window.

We did not measure verification latency for OpenTimestamps because
verifying an OTS proof requires the proof to have been upgraded with a
Bitcoin Merkle path, which in turn requires the calendar's next
aggregated commitment to be confirmed in a Bitcoin block. The dominant
cost of verification is therefore the round-trip to a Bitcoin RPC node
or block-explorer API, which is a function of the chosen verifier
rather than of the OTS protocol itself; the snaproot direct-on-chain
verification path described in Algorithm~\ref{alg:verify} has the
analogous property and inherits the latency of the chosen Solana RPC
endpoint.

\subsubsection{Qualitative Use-Case Suitability}
\label{sec:evaluation:comparison:qualitative}

Table~\ref{tab:suitability} generalises the comparison to the systems
for which we did not perform an end-to-end re-measurement. The matrix
scores each system on its fitness for a set of use cases that differ
primarily in their latency tolerance and trust requirements.

\begin{table}[!htbp]
\centering
\small
\caption{Qualitative use-case suitability of hash-anchoring systems.
\checkmark{} indicates a good fit, $\circ$ a workable but suboptimal
fit, and -- a structural mismatch.}
\label{tab:suitability}
\begin{tabularx}{\linewidth}{X c c c c c}
\toprule
\textbf{Use case (latency tolerance)} & \textbf{OTS} &
\textbf{OriginStamp} & \textbf{Chainpoint} &
\textbf{IPFS/Filecoin} & \textbf{snaproot} \\
\midrule
Long-term archival anchoring (hours)
  & \checkmark & \checkmark & \checkmark & $\circ$ & \checkmark \\
Compliance audit-trail anchoring (minutes)
  & $\circ$ & $\circ$ & $\circ$ & -- & \checkmark \\
Interactive document anchoring (seconds)
  & -- & -- & -- & -- & \checkmark \\
Capture-time photo anchoring (synchronous)
  & -- & -- & -- & -- & \checkmark \\
Content delivery with integrity (variable)
  & -- & -- & -- & \checkmark & -- \\
\bottomrule
\end{tabularx}
\end{table}

The matrix reflects the latency characteristics of each substrate and
its architectural role. OpenTimestamps and Chainpoint inherit Bitcoin's
confirmation window (approximately one hour) and are therefore strong
choices for archival use cases where latency is not a constraint, but
they cannot synchronously support interactive or capture-time
workflows. OriginStamp's centralised aggregation introduces additional
latency on top of Bitcoin's confirmation time. IPFS/Filecoin is the
only system in the comparison that combines content storage with
content-addressed retrieval, but it does not provide an authoritative
integrity claim independent of pinning. snaproot's structural advantage
is concentrated in the lower-latency use cases, where Bitcoin-based
substrates are not viable.

\subsection{Devnet vs.\ Mainnet}
\label{sec:evaluation:devnet-vs-mainnet}

The hashing and latency figures reported in
Sections~\ref{sec:evaluation:hashing}
and~\ref{sec:evaluation:latency} were obtained on Solana Devnet
against the production PHP\,/\,Laravel backend. The cost figures
reported in Section~\ref{sec:evaluation:cost} were measured on Devnet
but are directly applicable to Mainnet: the Solana memo program
executes deterministically, and the per-signature base fee is fixed at
5{,}000~lamports on both networks. The only Mainnet-specific cost
variable is the optional priority fee, which is set to zero in the
current snaproot deployment. Should the deployment configure a
non-zero priority fee in the future --- for example, to improve
confirmation latency during periods of Mainnet congestion --- the
corresponding line of Table~\ref{tab:cost} should be updated
accordingly.

The production deployment of snaproot operates on Solana Mainnet. A
representative Mainnet registration record, generated by the snaproot
mobile application during the preparation of this manuscript, carries
the following values (as displayed in the snaproot certificate format
and independently verifiable on the Solana Explorer):

\begin{itemize}
\item \textbf{Transaction signature:}
  \texttt{\seqsplit{3H8yzxebxUaZWRYGRzc9QyB1hhThJetLqfeCQuoYTqGLvY8nTijq49sEjNB5SfjsUNWdCq1smLcMJWbbKGEbWyMR}}%
  \footnote{Independently verifiable at
  \url{https://explorer.solana.com/tx/3H8yzxebxUaZWRYGRzc9QyB1hhThJetLqfeCQuoYTqGLvY8nTijq49sEjNB5SfjsUNWdCq1smLcMJWbbKGEbWyMR}.}
\item \textbf{Slot:} 423{,}337{,}332
\item \textbf{Timestamp:} 2026-05-31T10:23:55 (Solana Mainnet block time)
\item \textbf{Confirmation status:} FINALIZED, confirmations: MAX
\item \textbf{SHA-256 hash:}
  \texttt{\seqsplit{31f0bb4c07d6068ffc4288a2496812b9c9968f130004df23f8d4c0cd7b794629}}
\item \textbf{Registrant wallet:}
  \texttt{3K6xgyyRVowFhJW9kzWzjVhNWnzQ2kBTADetmo2yvacc}
\item \textbf{Capture source:} \texttt{Kamera (Live-Aufnahme)}
  [camera, live capture --- as displayed in the application]
\item \textbf{Fee paid:} 0.000010~SOL (consistent with
  Table~\ref{tab:cost})
\end{itemize}

This record confirms that the cost figures in Table~\ref{tab:cost} are
directly applicable to the Mainnet deployment: the observed fee of
0.000010~SOL matches the Devnet measurement exactly, as expected from
the deterministic fee schedule of the Solana memo program. The record
also demonstrates the full certificate format described in
Section~\ref{sec:architecture:onchain}, including the transaction
signature, registrant wallet, SHA-256 hash, block timestamp, and
capture metadata.

Devnet and Mainnet share the same protocol but differ in validator
set and load profile. Mainnet performance may differ from the latency
figures reported here, particularly during periods of congestion.
Solana Mainnet has experienced several extended outages between 2022
and 2024; these affect liveness for new anchorings during the outage
period but do not retroactively invalidate previously confirmed
records, in accordance with the safety/liveness distinction made in
Section~\ref{sec:threat:assumptions}. A formal characterisation of
snaproot's Mainnet latency and throughput performance under realistic
load is left to future work.

A further evaluation dimension is left to a later revision. The
measurements reported here were obtained on a workstation, whereas the
production target for snaproot is a mobile device. A future version of
this paper aimed at a conference venue will report measurements taken
on the actual mobile deployment target: SHA-256 hashing throughput on
a current mid-range iOS and Android device, and end-to-end registration
latency from the snaproot mobile application on Solana Mainnet. These
mobile-hardware measurements would strengthen the empirical
characterisation of the system as deployed and are planned for that
revision.

\clearpage

\section{Limitations, Roadmap, and Outlook}
\label{sec:limitations}

We structure this discussion in three tiers reflecting different levels
of development maturity. Section~\ref{sec:limitations:limitations}
describes known limitations of the current system that we explicitly
acknowledge rather than claim to have solved. Section~\ref{sec:limitations:roadmap}
lists features that are under active development and are expected to
appear in upcoming versions. Section~\ref{sec:limitations:outlook}
describes longer-term concepts that are deliberately not on the
near-term development path but for which a coherent design has been
worked out, so that the system can be extended in those directions if
and when a concrete need arises.

\begin{figure}[!htbp]
\centering
\begin{tikzpicture}[x=1mm, y=1mm, font=\small]


\colorlet{colDeployed}{teal!60!black}
\colorlet{colNext}{orange!80!black}
\colorlet{colMid}{red!60!black}
\colorlet{colLong}{violet!70!black}
\colorlet{fillDeployed}{teal!18}
\colorlet{fillNext}{orange!18}
\colorlet{fillMid}{red!12}
\colorlet{fillLong}{violet!12}
\colorlet{gridcol}{black!12}

\fill[black!5]  (20,-83) rectangle (50,3);
\fill[black!3]  (84,-83) rectangle (121,3);

\node[font=\tiny\bfseries] at (35,  6) {Deployed};
\node[font=\tiny\bfseries] at (67,  6) {Next release};
\node[font=\tiny\bfseries] at (102, 6) {Mid-term};
\node[font=\tiny\bfseries] at (139, 6) {Long-term};

\foreach \x in {20,52,84,123,155}
  \draw[gridcol, thin] (\x,7) -- (\x,-85);
\draw[gridcol, thin] (20,7) -- (155,7);

\foreach \y/\laba/\labb in {
  0/{Wallet}/{\& Keys},
  -16/{Anchoring}/{},
  -32/{Verification}/{},
  -48/{Trust}/{\& Protocols},
  -64/{AI}/{\& Provenance}}
{
  \draw[gridcol, thin] (0,\y) -- (155,\y);
  \node[anchor=east, font=\tiny\bfseries, text width=17mm, align=right]
    at (18,\y-7) {\laba\\\labb};
}
\draw[gridcol, thin] (0,-80) -- (155,-80);

\newcommand{\feat}[6]{%
  \fill[#5] (#1,#2) rectangle (#1+#3,#2-6);
  \draw[#6,thin,rounded corners=1pt] (#1,#2) rectangle (#1+#3,#2-6);
  \node[font=\tiny, anchor=center] at (#1+#3/2, #2-3) {#4};
}

\feat{21}{0}{28}{Local wallet}{fillDeployed}{colDeployed}
\feat{21}{-7}{28}{Relay wallet}{fillDeployed}{colDeployed}
\feat{53}{0}{28}{Wallet import}{fillNext}{colNext}
\feat{53}{-7}{28}{Multi-device}{fillNext}{colNext}

\feat{21}{-16}{28}{Capture-time}{fillDeployed}{colDeployed}
\feat{21}{-23}{28}{GPS screen}{fillDeployed}{colDeployed}
\feat{53}{-16}{28}{Optional salt}{fillNext}{colNext}
\feat{53}{-23}{28}{Sel.\ metadata}{fillNext}{colNext}
\feat{85}{-16}{37}{Merkle aggregation}{fillMid}{colMid}
\feat{85}{-23}{37}{snaproot API}{fillMid}{colMid}

\feat{21}{-32}{28}{Via TX sig.}{fillDeployed}{colDeployed}
\feat{21}{-39}{28}{Relay auth}{fillDeployed}{colDeployed}
\feat{53}{-32}{28}{Wallet history}{fillNext}{colNext}
\feat{53}{-39}{28}{``Am I first?''}{fillNext}{colNext}

\feat{21}{-48}{28}{Four-tier arch.}{fillDeployed}{colDeployed}
\feat{21}{-55}{28}{Cross-fork verif.}{fillDeployed}{colDeployed}
\feat{85}{-48}{37}{BSCP}{fillMid}{colMid}
\feat{124}{-48}{30}{Cross-chain migr.}{fillLong}{colLong}
\feat{124}{-55}{30}{Substrate indep.}{fillLong}{colLong}

\feat{21}{-64}{28}{C2PA reference}{fillDeployed}{colDeployed}
\feat{85}{-64}{37}{AI fake detection}{fillMid}{colMid}
\feat{124}{-64}{30}{AI-aug.\ provenance}{fillLong}{colLong}

\draw[gridcol, thin] (20,-82) -- (155,-82);
\fill[fillDeployed] (21,-84) rectangle (28,-89); \draw[colDeployed,thin] (21,-84) rectangle (28,-89);
\node[font=\tiny, anchor=west] at (29,-86.5) {Deployed};
\fill[fillNext]  (57,-84) rectangle (64,-89); \draw[colNext,thin]  (57,-84) rectangle (64,-89);
\node[font=\tiny, anchor=west] at (65,-86.5) {Next release};
\fill[fillMid]   (93,-84) rectangle (100,-89); \draw[colMid,thin]   (93,-84) rectangle (100,-89);
\node[font=\tiny, anchor=west] at (101,-86.5) {Mid-term};
\fill[fillLong] (124,-84) rectangle (131,-89); \draw[colLong,thin] (124,-84) rectangle (131,-89);
\node[font=\tiny, anchor=west] at (132,-86.5) {Long-term};

\end{tikzpicture}
\caption{Development timeline of snaproot: deployed features (v0.9.9),
near-term roadmap, medium-term, and long-term outlook items, organised
by thematic track.}
\label{fig:timeline}
\end{figure}

\subsection{Limitations}
\label{sec:limitations:limitations}

\paragraph{Registration of pre-modified content.} snaproot detects
changes relative to the registered hash. A manipulated or
synthetically generated file that is registered at the outset will be
anchored by the hash of that already-altered content, and all
subsequent verifications will return TRUE for that state. This is an
inherent limitation of any post-hoc anchoring system. The capture-time
anchoring workflow (Section~\ref{sec:architecture:capture}) narrows
this gap to the duration of the capture pipeline itself, but does not
eliminate the residual trust assumption on the integrity of the
capture device and application.

\paragraph{Timestamp frontrunning of registration transactions.} As
discussed in Section~\ref{sec:threat:adversary}, an adversary who
learns of a pending registration --- or who otherwise possesses the
file --- may anchor the same hash under their own wallet first,
manufacturing an earlier on-chain existence claim that competes with
the legitimate one. Snaproot does not enforce protocol-level
uniqueness on hashes (Section~\ref{sec:architecture:onchain}), so this
does not displace the legitimate user's later anchor, but it does
create an earlier competing record under the adversary's identity. In
practice, the attack requires the adversary to learn the file before
registration is confirmed; for files generated locally on the user's
device and immediately registered (as in capture-time anchoring), the
attack window is essentially nonexistent. For workflows that involve
sharing or transmitting the file before registration, the user should
register before sharing. The planned salt mechanism
(Section~\ref{sec:limitations:roadmap}) further mitigates the attack
by making the anchor depend on a user-controlled value not derivable
from the file alone.

\paragraph{Key management trade-off.} The current snaproot installation
generates and stores the private key locally on the user's device
(Section~\ref{sec:architecture:client}). The key can be exported and
imported into other Solana wallet providers, so the user retains an
out-of-snaproot path to recover their identity. Importing an exported
key into a second snaproot installation, which would enable
multi-device use under a single snaproot identity, is scheduled for
the next release. Until that mechanism is in place, a user who loses
their device without having exported the key beforehand cannot
continue to anchor new records under the same identity; existing
records remain verifiable through existence-proof persistence (G2),
but new anchorings under the same wallet require the original key.

\paragraph{No on-chain Merkle aggregation in the current implementation.}
snaproot currently anchors one hash per memo transaction. This yields
a simple architecture and a predictable per-file cost, but does not
amortize the per-transaction fee or the on-chain footprint across many
files. For high-volume use cases such as continuous log anchoring or
bulk archival workflows, aggregating many file hashes into a single
memo transaction via a Merkle tree reduces both the aggregate fee paid
by the relay wallet and the on-chain memo volume. Aggregation through
Merkle trees is straightforward in principle and is planned for
a future release.

\paragraph{Hash function dependency.} The security of snaproot binds to
the second-preimage (and, in the malicious-registrant case, collision)
resistance of SHA-256. Both properties are well-established under
classical computational assumptions~\cite{nist2015fips180}. Quantum
adversaries are relevant in two forms. Grover's algorithm provides a
quadratic speed-up for preimage and second-preimage search, reducing
the effective security of a 256-bit hash from 256 to 128 bits, which
remains computationally infeasible. The Brassard--H{\o}yer--Tapp
quantum collision-finding algorithm~\cite{brassard1998quantum} reduces
collision security from the classical birthday bound of ${\sim}128$
bits to approximately 85 bits, which is theoretically weaker but still
large. A migration path to alternative hash functions such as SHA-3 or
BLAKE3 should be considered for long-horizon deployments; those
functions are subject to analogous quantum bounds.

\paragraph{Reliance on a single blockchain substrate.} Long-term
verification of existing records currently relies on the continued
availability of Solana archives. While archive nodes and block explorers
are operated by multiple independent parties, the diversity of these
operators is lower than in the Bitcoin ecosystem, where full-node
archives are widely held. In the current implementation, users seeking
the strongest long-term verifiability guarantee should treat external
archival of their registration certificates as the primary defense in
depth. The cross-chain migration concept in
Section~\ref{sec:limitations:outlook} is a prospective extension path
that may, if realized, further reduce this dependency; it is not a
present mitigation.

\paragraph{Metadata is not cryptographically bound to the file.} The
metadata carried in the memo payload
(Section~\ref{sec:architecture:onchain}) is stored in cleartext on-chain
and is not included in the SHA-256 input. As on-chain data, the metadata
is itself immutable once confirmed; however, because it does not enter
the hash computation, it is not cryptographically bound to the file
content. In particular, the integrity guarantee (G1) certifies that the
file is unchanged, but says nothing about whether the associated
filename, capture source, or geolocation accurately describe that file:
a registrant could have recorded misleading metadata at registration
time. Two consequences follow. First, applications that require
metadata to be tamper-evident together with the file should, in the
current implementation, embed that information in the file itself (for
example in the image's EXIF data) before registration, so that it is
covered by the file hash. Second, because the metadata is public on the
ledger, users should not place sensitive information in this field.

A related privacy consideration concerns personal data. The file hash
itself carries no personal information and is not reversible to the
file, so it raises no data-protection concern on its own. The cleartext
metadata, however, can: in the current implementation, the
pre-submission screen displays the GPS availability status before
anchoring, and GPS coordinates are included in the metadata only if
location access has been granted and GPS is available at capture time;
if GPS is unavailable or the user has denied location access, the
record is anchored without geolocation metadata. Where GPS coordinates
are present, a sequence of such records under a single persistent wallet
can constitute a location-and-time profile that, combined with external
knowledge, may be personally identifiable. Because on-chain data is
immutable, such information cannot subsequently be erased, which stands
in tension with data-protection regimes that provide for a right to
erasure. Users who are concerned about location privacy should therefore
deny location access to the application. The planned per-use-case
metadata control (Section~\ref{sec:limitations:roadmap}) will provide
additional control: it will allow metadata to be omitted entirely or
to be included in the hash input, the latter removing it from the public
cleartext region while binding it to the file.

A further distinction concerns \emph{where} metadata resides. snaproot
controls the on-chain metadata field directly, and the planned
per-use-case control applies to it. Metadata embedded in the file
itself, such as EXIF GPS tags, is a separate matter: since the file
hash covers the entire file, any such embedded metadata is implicitly
anchored within the file hash regardless of the on-chain metadata
setting. For images captured through snaproot's own camera function,
the application can control whether such metadata is written into the
file at all; for files that are imported and snaprooted after the fact,
any pre-existing embedded metadata is already part of the file and
cannot be removed without altering it, and hence its hash. Users should
therefore be aware that choosing not to store on-chain metadata does
not by itself guarantee that no metadata is anchored, since metadata
may already reside in the file.

\subsection{Roadmap}
\label{sec:limitations:roadmap}

The following features are under active development.

\paragraph{Wallet import and recovery.} Wallet export is implemented:
the user can already extract the private key from a snaproot
installation and import it into any standard Solana wallet provider.
Importing an exported key back into another snaproot installation,
which would enable multi-device use under a single snaproot identity
and full in-app recovery from device loss, is scheduled for the next
release.

\paragraph{Verification via wallet history.} The verification protocol
(Algorithm~\ref{alg:verify}) currently requires the transaction
signature as input. A planned extension will additionally support
verification when only the file and the user's wallet public key are
known: the verifier scans the wallet's memo transactions on Solana and
returns the earliest one whose anchored hash matches the file. This is
useful for verifying records whose transaction signature has not been
retained alongside the file.

\paragraph{Priority query: ``Am I first?''} As discussed in
Section~\ref{sec:security:priority}, the earliest confirmed memo
transaction for a given hash constitutes the on-chain priority record
for that file. A planned in-app feature will expose this query
directly: at anchoring time, and on demand during verification, the
application will scan all indexed memo transactions on Solana for
the anchored hash and report whether the user's registration is the
earliest. This allows users to establish, and to demonstrate to third
parties, that no earlier on-chain existence claim exists for the same
content under any wallet. The query result will be accompanied by a
caveat reflecting the completeness assumptions discussed in
Section~\ref{sec:security:priority}.

\paragraph{Optional user-controlled salt.} The hashing module will
support an optional salt parameter, allowing the user to compute
$H_{\text{anchor}} = \mathrm{SHA256}(F \,\|\, s)$ for a user-chosen
value $s$. This serves two purposes simultaneously. First, it permits
two independent parties to anchor the same file with different salts,
yielding different on-chain records without conflict. Second, it
mitigates frontrunning: an adversary who observes a pending
registration transaction does not know the salt and therefore cannot
trivially construct a competing registration. The exact construction
of the salt --- a user-chosen value, a value derived from the wallet
public key, or a hybrid --- is subject to ongoing design analysis with
particular attention to recoverability (the user should not need to
separately store the salt to verify later), frontrunning resilience,
and compatibility with the outlook features described below.

\paragraph{Merkle aggregation for batch registration.} An aggregated
registration mode will allow a single memo transaction to anchor an
arbitrary number of file hashes through their Merkle root. This
amortizes the per-transaction fee paid by the relay wallet across many
records and reduces the aggregate on-chain memo volume; it is the
natural mechanism for high-volume use cases such as continuous log
anchoring or bulk archival workflows. The verification protocol is
correspondingly extended: the verifier presents the file, the
transaction identifier, and the Merkle proof for the file's hash
within the aggregated tree. This feature requires non-trivial changes
to the registration and verification workflow and is therefore
positioned as a medium-term rather than near-term item; it is listed
here because the protocol design is complete, but its implementation
is expected after the near-term roadmap items above.

\paragraph{snaproot API.} The snaproot registration and verification
functionality is currently accessible through the mobile application
and a web frontend. A planned extension will expose these operations
as a documented HTTP API, allowing third-party applications to
integrate snaproot as an integrity service without implementing their
own Solana interaction layer. The API will provide at minimum two
endpoints: a registration endpoint that accepts a file hash and
metadata and returns a transaction identifier, and a verification
endpoint that accepts a file hash and transaction identifier and
returns a deterministic integrity result. The trust properties of the
underlying system are unchanged by this interface layer: each
registration is still signed by the calling application's wallet
under the user's key material, and the provider-independence guarantee
(G5) remains intact because verification can always be performed
directly against any Solana RPC endpoint. The API targets use cases
in which snaproot's integrity anchoring is embedded as a building
block in document management systems, content pipelines, or
compliance platforms.

\paragraph{Selectable metadata binding.}
As described in Section~\ref{sec:architecture:onchain}, metadata is
currently stored in cleartext on-chain and is not part of the hash
input. A first level of metadata control is already implemented: the
pre-submission screen (Section~\ref{sec:architecture:capture})
displays the GPS availability status before anchoring, and GPS
coordinates are included only when location access has been granted
and GPS is available; the system gracefully degrades to anchoring
without geolocation metadata otherwise. A planned extension will let
the user choose, on a per-use-case basis, among three modes: attaching
no metadata at all; storing metadata in cleartext as today; or using
the cryptographic commitment scheme described in the following
subsection. The choice is a trade-off among on-chain legibility,
cryptographic binding, and confidentiality, and is best made per
application rather than fixed globally.

\subsection{Cryptographic Metadata Commitments}
\label{sec:limitations:metadata-commitments}

The absence of a cryptographic binding between on-chain metadata and
file content is the single most significant architectural limitation
of the current system for compliance, forensics, and legal
applications. A metadata field that is immutable on-chain but not
bound to the file it describes can be set to any value at registration
time; a verifier can confirm that the metadata has not changed since
registration, but cannot confirm that it accurately described the file
at registration time. This section sets out the planned construction
that addresses this limitation.

\paragraph{Proposed commitment scheme.}
The planned metadata commitment scheme separates file integrity from
metadata integrity while binding both to the same on-chain anchor:
\begin{align*}
  H_{\mathrm{file}}   &= \mathrm{SHA256}(\mathit{File}) \\
  H_{\mathrm{meta}}   &= \mathrm{SHA256}(\mathit{Metadata} \,\|\, s) \\
  H_{\mathrm{anchor}} &= \mathrm{SHA256}(H_{\mathrm{file}} \,\|\, H_{\mathrm{meta}})
\end{align*}
where $s$ is a user-held salt. Only $H_{\mathrm{anchor}}$ is stored
on-chain; neither $\mathit{Metadata}$ nor $s$ appear in the ledger.
This construction achieves three goals simultaneously. First, the file
can be verified independently: a verifier who holds the file computes
$H_{\mathrm{file}}$ and checks it against the component committed in
$H_{\mathrm{anchor}}$. Second, the metadata is cryptographically bound
to the file: any modification of either the file or its metadata
produces a different $H_{\mathrm{anchor}}$, which is detectable.
Third, the salt $s$ prevents brute-force recovery of low-entropy
metadata --- such as GPS coordinates or filenames --- by an adversary
who observes the on-chain anchor.

\paragraph{Selective disclosure.}
A notable property of the scheme is selective disclosure: the
registrant can prove the file unaltered without revealing the
metadata, by presenting $H_{\mathrm{file}}$ and the commitment path
without disclosing $\mathit{Metadata}$ or $s$. Conversely, the
registrant can prove both file and metadata integrity to an auditor by
presenting all inputs. This property is directly relevant to
compliance use cases where the existence proof must be demonstrable to
an external regulator without disclosing sensitive operational
metadata.

\paragraph{Relationship to the current system.}
In the current implementation, $H_{\mathrm{anchor}} =
H_{\mathrm{file}} = \mathrm{SHA256}(\mathit{File})$ and metadata is
stored in cleartext. The commitment scheme generalises this: a system
that stores $H_{\mathrm{anchor}}$ on-chain is backward-compatible in
the sense that a file-only verifier can still extract and verify
$H_{\mathrm{file}}$ from the commitment, as long as the commitment
path is retained alongside the anchor. The transition from the current
design to the commitment scheme is therefore a client-side change with
no on-chain protocol modification required.

\subsection{Outlook}
\label{sec:limitations:outlook}

The following concepts are not on the near-term or medium-term
development path. We describe them because we expect questions about
long-term persistence, multi-party protocols, and substrate
independence to arise as the system matures, and because articulating
a coherent design now ensures that the system can be extended in those
directions without architectural surprises. Cross-chain migration
(3a--3c) is positioned as a long-term safety net that becomes
relevant only when the lifetime of the current substrate comes into
question. The Bilateral snaproot Commitment Protocol (BSCP) and the
AI-augmented provenance direction are medium-to-long-term items whose
realization depends on application-layer workflow development beyond
the core anchoring mechanism.

\paragraph{Cross-chain trust migration: anchoring older chain
generations within newer ones (3a).} As blockchains evolve, the
particular chain on which snaproot operates today may be superseded by
a newer protocol version. A provider --- which may be snaproot itself,
or any other party with sufficient archival capacity --- can produce a
verifiable snapshot of the relevant chain history, compute a canonical
hash over this snapshot, and anchor that hash within the newer chain
using the same snaproot mechanism. A verifier in the future can then
prove the integrity of any record on the older chain by (a) producing
the record from an archive, (b) showing that the archive snapshot
hashes to the value anchored in the newer chain, and (c) verifying the
newer-chain anchor. Diversity among independent snapshot providers
provides additional robustness: if multiple parties independently
publish snapshot hashes and they agree, manipulation is detectable.

\paragraph{Cross-chain hash migration: user-initiated, wallet-bound (3b).}
A user can elect to migrate all snaproots created under their wallet
$W$ to a different blockchain --- for example, from Solana to Ethereum.
This migration is performed as a single aggregated transaction on the
target chain. The migration record contains the source-wallet pubkey,
the migration timestamp, and a commitment to the set of migrated
records (for example, their Merkle root). The exact data structure of
the commitment is a design question that we leave open here: a pure
Merkle root minimizes on-chain footprint but requires the user to hold
the proof set externally; an explicit hash list is self-contained but
larger; a Merkle root combined with a locally retained proof set
combines the efficiency of the former with the verifiability of the
latter. We anticipate the third variant to be the most natural fit for
snaproot's user-sovereignty property, since it places the proof
material in the user's own keeping. Migration is purely user-initiated:
no party other than the holder of the source wallet's private key can
migrate records, which preserves the user-sovereignty property of the
trust architecture across the chain boundary.

\paragraph{Substrate independence of snaproot as a whole (3c).}
Mechanism 3b above is the foundation on which snaproot itself can be
ported to a different blockchain substrate in the future, while
preserving the verifiability of records created on the current
substrate. The choice of Solana today is an operational decision
optimized for the latency and cost profile of the current chain
ecosystem; it is not an architectural commitment. A future snaproot
client on a different chain would (i) anchor new records natively on
the new chain, (ii) recognize migrated records via their cross-chain
migration commitments, and (iii) optionally provide a fallback
verification path against archives of the old chain. The design space
for the user-facing presentation of pre-migration records in a
post-migration client (sorting, search, display of provenance across
chains) is part of this outlook but does not affect the underlying
cryptographic guarantees.

\paragraph{Bilateral snaproot Commitment Protocol (BSCP) for multi-party document anchoring (4).}
\label{sec:limitations:bilateral}
A recurring scenario in legal, commercial, and regulatory contexts is
that two parties wish to jointly anchor a shared document --- for
example, a contract --- such that neither can later claim a different
version was the agreed original. The naive approach of each party
independently anchoring the document is insufficient, because it
admits a manipulation attack: one party could anchor a subtly modified
version of the document \emph{before} the other party anchors the
agreed version, thereby manufacturing an earlier on-chain existence
claim for the altered content. This is a content-level frontrunning
attack, distinct from the transaction-level frontrunning discussed in
Section~\ref{sec:threat:adversary}, because it does not require
knowledge of the other party's pending transaction --- it requires only
that the attacker anchor first.

The Bilateral snaproot Commitment Protocol (BSCP) addresses this by requiring both
parties to produce on-chain evidence of agreement to the \emph{same}
hash. We describe two variants of increasing strength.

In the \emph{loosely coupled} variant, both parties independently
anchor the same hash $H = \mathrm{SHA256}(D)$ of the agreed document
$D$ under their respective wallets, within a defined time window. A
verifier confirms that (i)~both anchors carry the same hash, and
(ii)~both block times fall within the agreed window. This provides mutual
attestation without sequential dependency: neither party needs to wait
for the other's transaction to be confirmed before submitting their
own. The limitation is that it does not prove that each party had seen
the other's anchor at the time of submission; a party could in
principle anchor a different version of the document under a different
transaction and later claim their intended anchor was the loosely
coupled one.

In the \emph{tightly coupled} variant, the second party's anchor
payload includes the transaction signature of the first party's
confirmed anchor. This creates a cryptographic chain: the second
anchor provably post-dates the first and explicitly acknowledges it.
A verifier can confirm that (i) Party~A anchored $H$ at time $t_A$,
(ii) Party~B anchored $H$ at time $t_B > t_A$, and (iii) Party~B's
anchor payload references Party~A's transaction signature, proving
that Party~B was aware of Party~A's anchor. The tight coupling
eliminates the ambiguity of the loosely coupled variant but requires
sequential submission: Party~B must wait for Party~A's transaction to
be confirmed before constructing their own anchor payload.

Both variants rely on snaproot's existing wallet infrastructure: each
party submits their anchor using their own snaproot user wallet, signed
locally, with the relay wallet as fee payer. No protocol extension is
required beyond the ability to include a reference transaction
signature in the memo payload, which the current payload structure
already accommodates. The application-level workflow --- including
the time window agreement, the handshake between wallets, and the
presentation of the bilateral proof --- is the subject of ongoing
design work. The BSCP is positioned as a medium-term outlook item:
the cryptographic primitives are already present in the deployed
system, but the application-level coordination protocol requires
dedicated design and implementation effort that goes beyond the
near-term roadmap. A key open question is the handling of the
non-participation case: if one party anchors and the other does not,
the single-party anchor remains valid as a unilateral existence proof
but does not constitute a BSCP commitment; the protocol should
make this distinction explicit to users.

\paragraph{Identity anchoring pattern as a user workflow.} A user
concerned about long-term provability of authorship --- not only
existence --- can anchor an identity-attesting image as one of the
first records under their wallet (for example, a photograph of
themselves holding an identity document). Should the user later lose
their private key, the existence-proof persistence (G2) ensures that
the identity-attesting record itself remains verifiable; the user can
then re-establish the authorship link by physical means (in-person
identity verification, or automated face-against-document
comparison~\cite{spreeuwers2012biosig}) combined with the
chain-resident record. This pattern does not require any extension of
the snaproot system; it relies entirely on existing functionality and
on the structural separation between existence and authorship proofs
discussed in Section~\ref{sec:security}. It is documented here as a
recommended user practice, not as a planned feature.

\paragraph{AI-augmented integrity and provenance verification (5).}
\label{sec:limitations:ai}
The integrity guarantees that snaproot provides are deterministic and
binary: a file either matches its registered hash or it does not. This
is complementary to, but structurally distinct from, the probabilistic
judgements that AI-based content analysis produces --- for example,
whether an image appears to have been synthetically generated or
manipulated. A longer-term direction for the snaproot ecosystem is to
use the chain-resident existence record as a trusted anchor point for
downstream AI-based provenance workflows.

The envisioned integration works as follows. A file that has been
registered via capture-time anchoring carries an immutable on-chain
record of its state at the moment of capture. Any AI model subsequently
applied to that file --- for fake detection, deepfake identification,
or content authenticity classification --- can be given the guarantee
that the file it is analysing is cryptographically identical to the
registered original: the hash comparison of Algorithm~\ref{alg:verify}
establishes this before the AI model is invoked. This eliminates a
class of attacks in which an adversary substitutes a manipulated
version of the file between registration and AI analysis, causing the
model to render a verdict on the wrong input.

Conversely, the AI model's output can itself be anchored as a
snaproot record, binding the model's verdict to the specific file
version and to the specific model version that produced it. This
creates a verifiable provenance chain: the original file, its
integrity proof, and the AI-generated analysis are all
chain-resident and mutually linked, allowing any later party to
reconstruct the full provenance of the authenticity claim.

This direction is structured in two layers with different time
horizons. The first layer --- using snaproot's chain-resident existence
record as a trusted input anchor for AI-based fake detection --- is a
medium-term direction: it does not require changes to the snaproot
core and depends primarily on the availability of suitable AI models
and integration interfaces. The second layer --- anchoring the AI
model's verdict itself as a snaproot record to create a full
provenance chain --- is a longer-term direction that depends on the
maturation of standardised AI provenance interfaces, an active area
of development in initiatives such as C2PA~\cite{c2pa2023spec}. In both
cases, the specific AI models and integration points are outside the
scope of the snaproot core architecture. The snaproot contribution in this context is the
provision of a tamper-evident, provider-independent anchor that any
conforming AI provenance workflow can reference without trusting the
snaproot operator.

\section{Discussion}
\label{sec:discussion}

The technical contributions described in Sections~\ref{sec:architecture}
through~\ref{sec:limitations} support a higher-level claim that we made
in the introduction: snaproot realizes a trust architecture whose
integrity guarantees do not depend on any single trusted party and
remain verifiable across organizational, technological, and generational
boundaries. We now revisit this claim in the light of the technical
material and discuss how it manifests in two illustrative application
domains: media authenticity and compliance.

\subsection{The Four-Tier Trust Architecture, in Retrospect}

The four tiers introduced in Section~\ref{sec:introduction} are now
backed in different ways. The first three are realized by specific
technical mechanisms in the deployed system; the fourth is set out as
a coherent design rather than as a delivered mechanism. User
sovereignty is realized through locally generated keypairs, locally
performed hashing, and a verification path that requires no snaproot
infrastructure (Section~\ref{sec:security}, G5). Within-chain integrity
(Section~\ref{sec:security}, G1) reduces formally to the second-preimage
resistance of SHA-256, and to the safety of Solana consensus for the
persistence of the recorded hash. Cross-fork verifiability
(Section~\ref{sec:security}, G6) follows from the distinction between
safety and liveness: a network event that affects liveness does not
retroactively change confirmed records. Cross-chain trust migration
(Section~\ref{sec:limitations:outlook}) is the prospective fourth tier;
if realized, it would extend the architecture so that records can
outlive the specific blockchain on which they were created. For the
present system, the long-term existence guarantee is carried by the
first three tiers, with the fourth held in reserve as an additional
safety net for horizons that exceed the lifetime of any single
substrate.

A structural property that runs through all four tiers is the
\emph{separation between existence proof and authorship proof}. The
claim that a file with a given hash existed at a given time is a
chain-resident fact, verifiable by anyone with the file and any working
RPC endpoint; this claim is robust to private-key loss, organizational
change, and chain migration. The claim that a specific natural or
legal person produced the file is bound to a private key and can be
forfeited or transferred. The two claims compose in the natural way:
an identity-anchoring user workflow rebinds a lost authorship claim
to a chain-resident identity-attesting record, recovering the property
in practice while preserving the structural separation in theory.

A practical consequence worth noting explicitly: snaproot users do not
need to trust the snaproot operator to verify their own records. If
the snaproot organization were to cease operations tomorrow, every
existing snaproot record on Solana Mainnet would remain fully
verifiable using the public verification protocol of
Algorithm~\ref{alg:verify} against any Solana RPC endpoint. This is
not a contingency or a fallback mechanism; it is the default state of
the system. Trust in the snaproot operator is needed only for
convenience features --- the web frontend, the certificate rendering,
the mobile-app user interface --- none of which sit on the critical
path for cryptographic verification.

\subsection{Application Domain: Media Authenticity}
\label{sec:discussion:media}

Generative AI has made it possible to produce highly realistic
synthetic images, video, and audio that cannot be distinguished from
authentic material by visual inspection
alone~\cite{rossler2019faceforensics, li2018ictuoculi}. Approaches
based on classifying suspected content as ``authentic'' or
``synthetic'' are inherently probabilistic and degrade as generative
quality improves: they answer a question (whether a piece of content
is generated) that depends on the state of generative technology,
when the answerable question is whether a piece of content has
changed since a known reference point.

snaproot answers the latter question deterministically. Given a
published image $I$ that has been registered on-chain at time $t$,
any later version $I'$ that differs from $I$ in even a single bit
will produce a different hash and will be detected as modified.
Manipulation of an already-published image, whether by re-encoding,
deepfake overlay, or fully synthetic re-rendering of a similar scene,
leaves a chain-evident trace: the modified image's hash does not
appear in the chain, while the original's hash does.

Two sub-cases must be distinguished, as we noted in
Section~\ref{sec:architecture:capture}.

\paragraph{Post-publication manipulation.} The image is registered
before publication. After publication, an adversary creates a modified
or fully fabricated variant. snaproot detects this deterministically:
the original retains a chain-anchored existence proof; the variant
does not. The trust assumption is that the original was published
truthfully.

\paragraph{Pre-publication manipulation.} The image is fabricated by
the publisher and registered as if it were authentic. In this case
snaproot alone cannot detect the fabrication, since the registered
hash is the hash of the fabricated content. Capture-time anchoring
within a trusted capture application materially reduces this gap: the
adversary must compromise either the device or the application to
produce a fraudulent registration, rather than simply post-process a
file before registering it. Combined with hardware-attested capture
(an active area in the C2PA ecosystem~\cite{c2pa2023spec}), this
reduces the trust surface to the integrity of well-defined hardware
and software components rather than to the credibility of the
publisher.

In both sub-cases, snaproot complements rather than competes with
AI-based detection: where the latter produces a probabilistic estimate
that may be revisited as new generators emerge, snaproot produces a
binary fact that does not depend on generator-specific knowledge.

\subsection{Application Domain: Tamper-Evident Records in Regulated Environments}
\label{sec:discussion:compliance}

Many regulatory regimes require organizations to maintain integrity
of audit logs, financial records, or operational documentation for
extended retention periods. Such requirements appear in regimes such
as the German GoBD principles for digital business records,
ISO/IEC~27001 controls on logging and monitoring, GDPR data-processing
logs, SOX-related financial controls, and NIS2 incident-record
obligations. We do not claim that snaproot in itself satisfies any
specific control of these regimes; rather, we discuss it as a
technical building block that may complement an organization's
existing compliance mechanisms. The common requirement that motivates
the discussion is that records must be demonstrably unaltered after
creation, and that this demonstration must be acceptable to external
auditors.

Existing approaches to this requirement typically rely on
organization-controlled storage with limited write privileges
(WORM-style storage), on signatures using organization-held keys, or
on logging to external third-party services. Each of these places at
least some trust in a party that may have an interest in the records'
appearance. WORM-storage relies on the organization to configure and
operate it correctly; organization-signed logs can be reissued under
new keys if the originals are lost or compromised; external logging
services are subject to the same trust assumptions as any third-party
service provider.

snaproot offers a structurally different building block that may
complement these existing approaches. By anchoring the hashes of
integrity-relevant records on a public blockchain, the organization
commits the records to a substrate that the organization itself does
not control. An auditor can subsequently verify integrity of any
specific record by hashing it and looking up the corresponding
on-chain entry --- a procedure that requires no cooperation from the
organization being audited and no trust in any party that the
organization could plausibly influence. The trust surface for the
integrity check shifts from the organization (which has an incentive
in the audit outcome) to the blockchain substrate (which does not).
This is a supplementary integrity signal alongside the organization's
primary controls, not a replacement for them.

Two operational granularities are relevant for compliance use cases.
For records that are critical individually --- privilege escalations,
contract signatures, key custody operations --- each event can be
anchored separately as its own memo transaction, at a per-event cost
dominated by the Solana transaction fee
(Section~\ref{sec:evaluation}). For high-volume records such
as access logs and system audit trails, aggregated registration via
Merkle trees (a roadmap feature,
Section~\ref{sec:limitations:roadmap}) allows a single on-chain
anchoring transaction to commit to all log events within a fixed
period, amortizing the cost while preserving the per-entry
verifiability.

For both granularities, the user-sovereignty property of the trust
architecture has a specific compliance consequence: the organization
anchors records under its own wallet, and any auditor can verify those
records without going through snaproot. This avoids creating a new
third-party dependency in the compliance chain.

One caveat applies specifically in this domain. The immutability that
makes the ledger attractive for tamper-evidence is in tension with
data-protection requirements that provide for a right to erasure, such
as the GDPR. Anchoring the hash of a record raises no concern, since
the hash is not reversible to its content; but compliance deployments
must ensure that no personal data is written into cleartext metadata,
since on-chain data cannot later be erased. The hash-bound metadata
mode on the roadmap (Section~\ref{sec:limitations:roadmap}) is the
appropriate choice where record metadata must be retained in a manner
compatible with such regimes.

We close this discussion with a deliberate restraint. None of the
regulatory regimes named above recognizes blockchain anchoring as a
substitute for its primary control mechanisms; deployments must
continue to satisfy those mechanisms in their own right. The
contribution of snaproot in this context is narrower: it provides an
externally verifiable integrity signal that an auditor can check
without trusting the audited organization or any single third-party
service. A serious treatment of how this integrates with the control
catalogues of a specific regime is beyond the scope of the present
paper.

\subsection{Positioning Relative to Prior Hash-Anchoring Systems}

OpenTimestamps, OriginStamp, and Chainpoint share with snaproot the
hash-anchoring paradigm and the goal of providing integrity guarantees
that do not depend on a trusted authority. We have argued throughout
this paper that snaproot differs from these systems along three
properties: user sovereignty (records are signed by the user, not by
an operator's aggregator), latency profile (sub-five-second
confirmation on Devnet, supporting capture-time workflows under
typical conditions), and a designed-in path
toward substrate independence (cross-chain trust migration as an
explicit outlook rather than an afterthought).

The first of these is a delivered property of the trust architecture.
The second is specific to the substrate choice and is the main
technical reason that the present work positions Solana, despite its
more concentrated validator set compared to Bitcoin, as a reasonable
substrate for this application class. The third is prospective rather
than realized; its value at this stage is that the architecture leaves
room for it, not that the system already provides it. The trade-off in
the substrate choice is explicit: snaproot accepts a weaker
liveness-decentralization profile in exchange for a substantially
better latency profile, while preserving safety guarantees for
already-confirmed records and holding the cross-chain migration
outlook in reserve as a longer-term safety net.

\section{Conclusion}
\label{sec:conclusion}

We have presented snaproot, a system for decentralized file integrity
verification through blockchain-anchored cryptographic hashing on
Solana. Beyond the basic anchoring mechanism, the contribution of this
work is the articulation and partial realization of a four-tier trust
architecture that combines user sovereignty, within-chain integrity,
cross-fork verifiability, and a substrate-independent migration
outlook. We have identified the structural separation between existence
proof and authorship proof as a property that distinguishes snaproot
from designs in which a single secret carries both claims, and have
shown that this separation makes the existence guarantee robust to
the loss of any specific key and to the lifetime of any specific
operator.

We have formalized the threat model, derived the integrity guarantee
from the second-preimage resistance of SHA-256, and discussed
authorship, public verifiability, provider independence, and
cross-fork verifiability as separate security goals. We have evaluated
the system empirically on Solana Devnet, characterizing hashing
performance across file sizes from 1\,KB to 500\,MB and registration
and verification latencies under typical conditions, and have shown
that the resulting latency profile makes capture-time anchoring of
photographs a practical workflow rather than an aspirational feature.
The system is deployed on Solana Mainnet and is in productive use.

We have also been explicit about what snaproot does \emph{not}
provide: it does not certify the authenticity of pre-registration
content, it does not currently defend against frontrunning of
unconfirmed registrations, and it does not eliminate the long-term
reliance on blockchain archive availability. We have set out a roadmap
and a longer-term outlook to address these limitations as the system
matures, while keeping the core architecture stable.

snaproot is not a replacement for content authentication systems or
for compliance frameworks; rather, it provides a tamper-evident record
of a file's state at a given point in time, with a trust profile that
existing centralized integrity services cannot match. Used together
with capture-time provenance tools and with established compliance
frameworks, it can serve as a structural building block for digital
trust in legal documentation, content authentication, and
regulatory-grade record-keeping.

\section*{Author Contributions and Intellectual Property}

snaproot is an idea, design, and development of Chain Horizon GmbH.
The conception, system architecture, and implementation of snaproot
originated within Chain Horizon GmbH prior to the preparation of this
manuscript. ``snaproot'' is a registered word mark (Wortmarke) of
Chain Horizon GmbH with the German Patent and Trademark Office
(Deutsches Patent- und Markenamt, DPMA).

Arslan Br\"omme and Tarkan Yavas are affiliated with Chain Horizon
GmbH and led the conception, design, and development of the snaproot
system. They are the sole authors of this manuscript.

All rights, title, and interest in and to snaproot --- including,
without limitation, the system, its source code, the name
``snaproot'', the associated design, and the corresponding DPMA
registration --- remain exclusively with Chain Horizon GmbH.
Authorship of this manuscript does not constitute, transfer, grant,
or imply any ownership, license, or other proprietary rights in or to
snaproot or its underlying intellectual property.

\section*{Acknowledgments}

The authors gratefully acknowledge the contribution of Ramazan
Karata\c{s}, who, during his internship at Chain Horizon GmbH,
conducted the empirical evaluation work reported in
Section~\ref{sec:evaluation}, namely the measurements of hashing
performance, transaction latency, throughput, and cost.

\section*{Use of AI Tools}

The authors used the following AI-assisted tools during the preparation
of this manuscript: ChatGPT (OpenAI, accessed 2025--2026) and Claude Opus 4
(Anthropic, accessed via Claude.ai, 2025--2026). These tools were employed to
support literature research, language editing, structuring of the
manuscript, and refinement of academic writing.

All intellectual contributions reflected in this manuscript ---
including research design, analysis, interpretation, and conclusions
--- remain solely those of the human authors. Rights in the snaproot
system and its underlying intellectual property are governed by the
Author Contributions and Intellectual Property section above and
remain exclusively with Chain Horizon GmbH. AI-generated outputs were
reviewed, edited, and validated by the authors prior to inclusion.

\bibliographystyle{unsrtnat}
\bibliography{references}

\end{document}